\crefname{claim}{Claim}{Claims}
\title{Connected Coordinated Motion Planning with Bounded Stretch}
\titlerunning{Connected Coordinated Motion Planning with Bounded Stretch}
\author{Sándor P. Fekete}{Department of Computer Science, TU Braunschweig, Braunschweig,  Germany}{s.fekete@tu-bs.de}{https://orcid.org/0000-0002-9062-4241}{}
\author{Phillip Keldenich}{Department of Computer Science, TU Braunschweig, Braunschweig,  Germany}{p.keldenich@tu-bs.de}{https://orcid.org/0000-0002-6677-5090}{}
\author{Ramin Kosfeld}{Department of Computer Science, TU Braunschweig, Braunschweig,  Germany}{r.kosfeld@tu-bs.de}{https://orcid.org/0000-0002-1081-2454}{}
\author{Christian Rieck}{Department of Computer Science, TU Braunschweig, Braunschweig, Germany}{rieck@ibr.cs.tu-bs.de}{https://orcid.org/0000-0003-0846-5163}{}
\author{Christian Scheffer}{Faculty of Electrical Engineering and Computer Science, Bochum University of Applied Sciences, Bochum,  Germany}{christian.scheffer@hs-bochum.de}{https://orcid.org/0000-0002-3471-2706}{}
\authorrunning{S.~P.~Fekete, P.~Keldenich, R.~Kosfeld, C.~Rieck, and C.~Scheffer}
\keywords{Motion planning, parallel motion, bounded stretch, scaled shape, makespan, connectivity, swarm robotics}
\newcommand{\mountain}{\mathcal{M}\xspace}
\newcommand{\valley}{\mathcal{V}\xspace}
\begin{document}

\maketitle

\begin{abstract}
We consider the problem of connected coordinated motion planning for a large collective of simple, identical robots: 
From a given start grid configuration of robots, we need to reach a desired target configuration via a sequence of parallel, collision-free robot motions, such that the set of robots induces a connected grid graph at all integer times. 
The objective is to minimize the \emph{makespan} of the motion schedule, i.e., to reach the new configuration in a minimum amount of time. 
We show that this problem is \NP-complete, even for deciding whether a makespan of $2$ can be achieved, while it is possible to check in polynomial time whether a makespan of $1$ can be achieved.
On~the algorithmic side, we establish simultaneous constant-factor approximation for two fundamental parameters, by achieving \emph{constant stretch} for \emph{constant scale}.
Scaled shapes (which arise by increasing all dimensions of a given object by the same multiplicative factor) have been considered in previous seminal work on self-assembly, often with unbounded or logarithmic scale factors; we provide methods for a generalized scale factor, bounded by a constant.
Moreover, our algorithm achieves a \emph{constant stretch factor}:
If mapping the start configuration to the target configuration requires a maximum Manhattan distance of $d$, then the total duration of our overall schedule is $\mathcal{O}(d)$, which is optimal up to constant~factors.
\end{abstract}

\section{Introduction}
Coordinating the motion of a set of objects is a fundamental problem that occurs in a large spectrum of theoretical contexts and practical applications. 
A~typical challenge arises from relocating a large collection of agents from a given start into a desired goal configuration in an efficient manner, while respecting a number of natural constraints, such as avoiding collisions or disrupting the coherence of the arrangement. 
This problem was also the subject of the 2021~Computational Geometry Challenge, highlighting the high relevance for the algorithmic community; see~\cite{FeketeKKM22} for an overview and~\cite{shadoks,gitastrophe,UNIST} for successful contributions.

In this paper, we consider a \emph{connected} configuration of objects, e.g., a (potentially large) collective of mobile robots or blocks of building material that can be moved by a large group of robots, which needs to be transformed into a desired target configuration by a sequence of parallel, collision-free motions that keeps the overall arrangement connected at all integer times.
Problems of this type occur in many contexts requiring relocation of autonomous agents; the connectivity constraint arises naturally, e.g., for assemblies in space, where disconnected pieces cannot regain connectivity, or for small-scale swarm robots (such as \emph{catoms} in \emph{claytronics}~\cite{goldstein2004claytronics}) which need connectivity for local motion, electric power and communication; see~\cref{fig:catoms}.

\begin{figure}[ht]
	\centering
	\includegraphics[height=0.275\textwidth]{./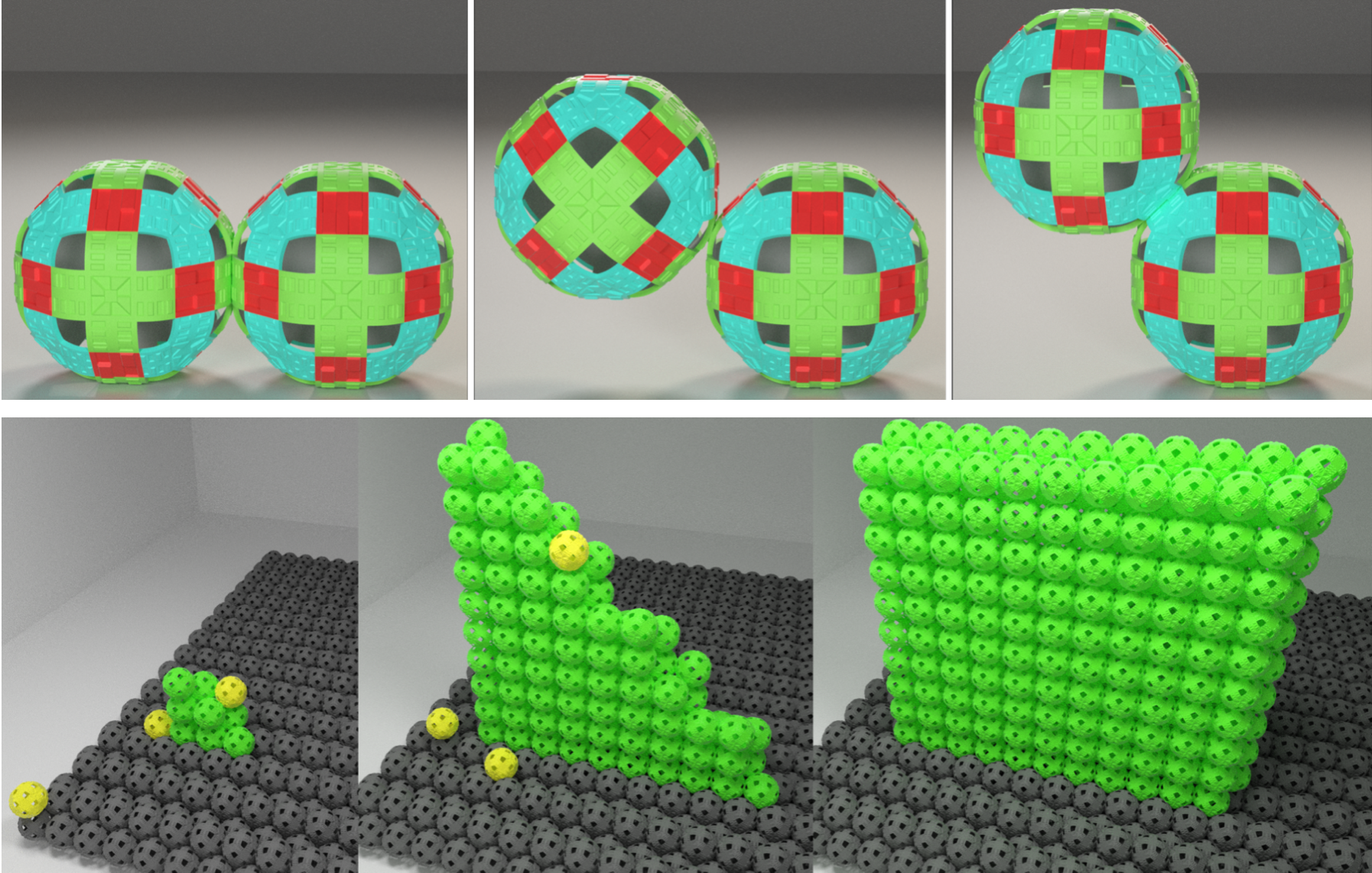}
	\includegraphics[height=0.275\textwidth]{./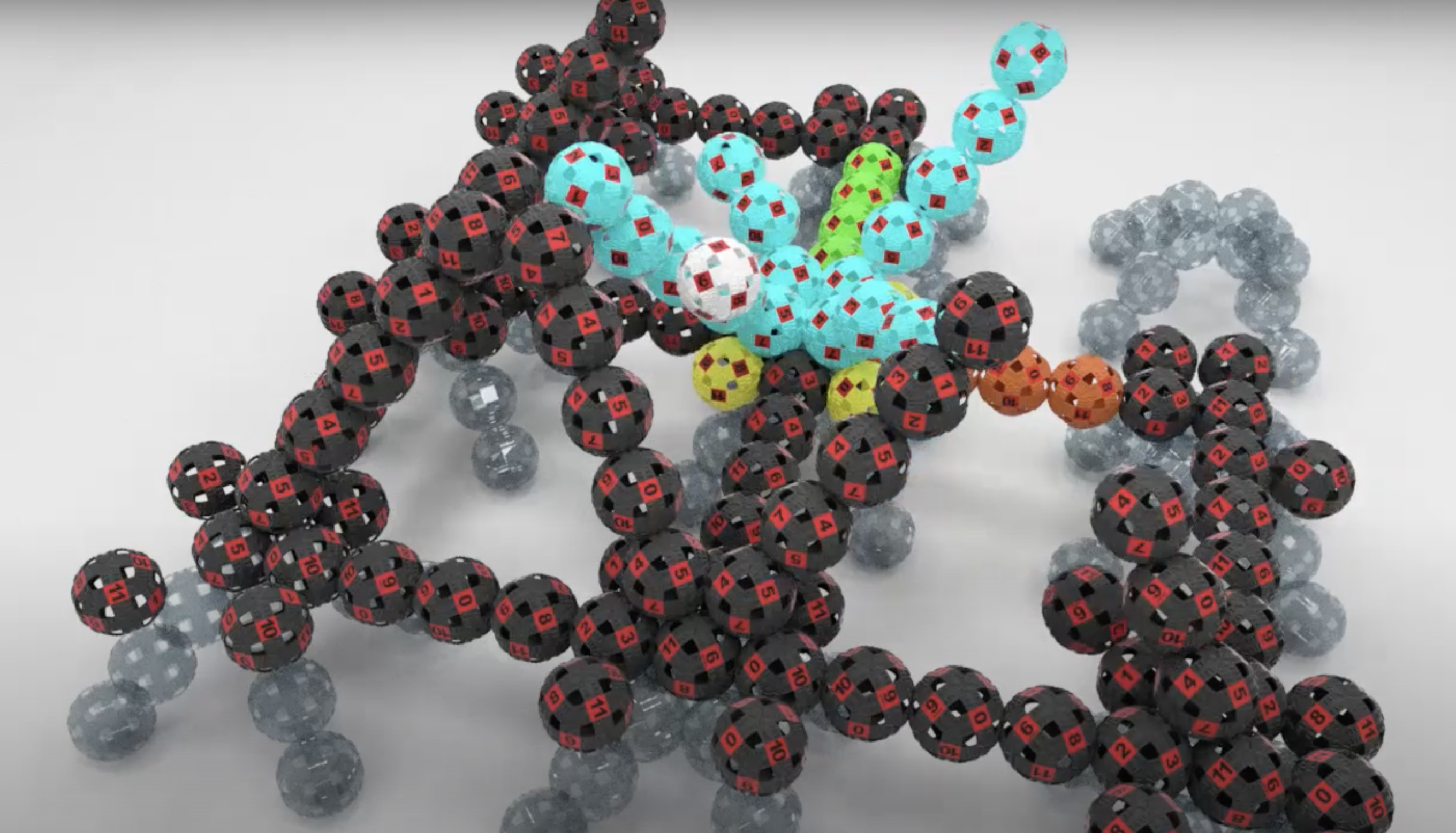}
	\caption{
		(Top left) An autonomous, sphere-shaped catom, changing location by rotating around a second catom used as a pivot~\cite{bourgeois2020}.
		(Bottom left) A collective of catoms building a wall~\cite{bourgeois2020}.
		(Right)~A~configuration of catoms in the process of building a scaffold structure~\cite{thalamy2019distributed}.}
	\label{fig:catoms}
\end{figure}

A crucial algorithmic aspect is \emph{efficiency}: How can we coordinate the robot motions, such that a target configuration is reached in timely or energy-efficient manner? 
Most previous work has largely focused on sequential schedules, where one robot moves at a time, with objectives such as minimizing the number of moves. 
In practice, however, robots usually move simultaneously, so we desire a \emph{parallel} motion schedule, with a natural objective of minimizing the time until completion, called \emph{makespan}.
How well can we exploit parallelism in a robot swarm to achieve an efficient schedule?
As illustrated in~\cref{fig:example}, this is where the connectivity constraints make a tremendous difference.

A critical parameter in self-assembly is the robustness of the involved shapes, corresponding to sufficient local connectivity to prevent fragility. 
This leads to the concept of \emph{scaled shapes}; intuitively, a scale factor of $c$ corresponds to replacing each pixel of a polyomino shape by a quadratic $c\times c$ array of pixels. 
This has fundamental connections to Kolmogorov and runtime complexity, as shown by Soloveichik and Winfree~\cite{soloveichik2007complexity}:
``Furthermore, the independence of scale in self-assembly theory appears to play the same crucial role as the independence of running time in the theory of computability$\ldots$ [we] show that the running-time complexity, with respect to Turing machines, is polynomially equivalent to the scale complexity of the same function implemented via self-assembly by a finite set of tile types.''
As a consequence, limiting scale has received considerable attention: While Soloveichik and Winfree established unbounded scale in general self-assembly, other work has managed to achieve logarithmic and even (in specific scenarios) constant scale.

As we demonstrate in this paper, achieving optimal makespan for connected reconfiguration is provably hard, even in relatively basic cases. 
On the positive side, we present methods that are capable of achieving a constant-factor approximation, assuming not more than a generalization of constant scale of start and target configurations. 
In fact, our method realizes \emph{constant stretch}: If mapping the start configuration to the target configuration requires a maximum Manhattan distance of $d$, then the total duration of our overall schedule is~$\mathcal{O}(d)$. 
As can be seen from \cref{fig:example-a} (where $d$ corresponds to trajectory A), this is less straightforward than in a non-connected setting, even in very basic instances. 
Instead, this quickly requires coordination of the whole arrangement, as sketched in~\cref{fig:example-b}.

\begin{figure}[ht]
	\centering
	\begin{subfigure}[b]{0.24\textwidth}
		\centering
		\includegraphics[width=0.8\textwidth]{./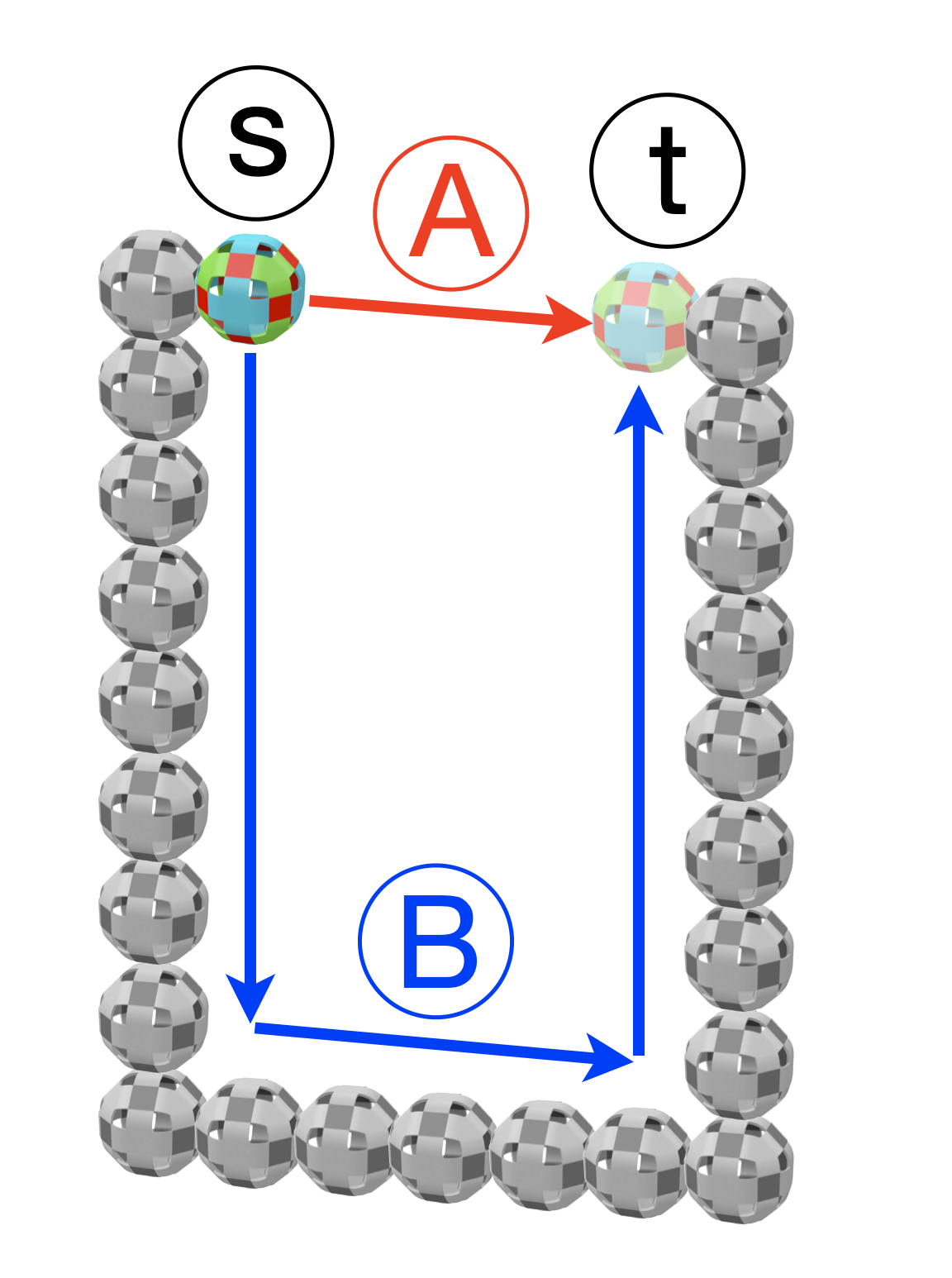}
		\caption{}
		\label{fig:example-a}
	\end{subfigure}\hfil
	\begin{subfigure}[b]{0.24\textwidth}
		\centering
		\includegraphics[width=0.8\textwidth]{./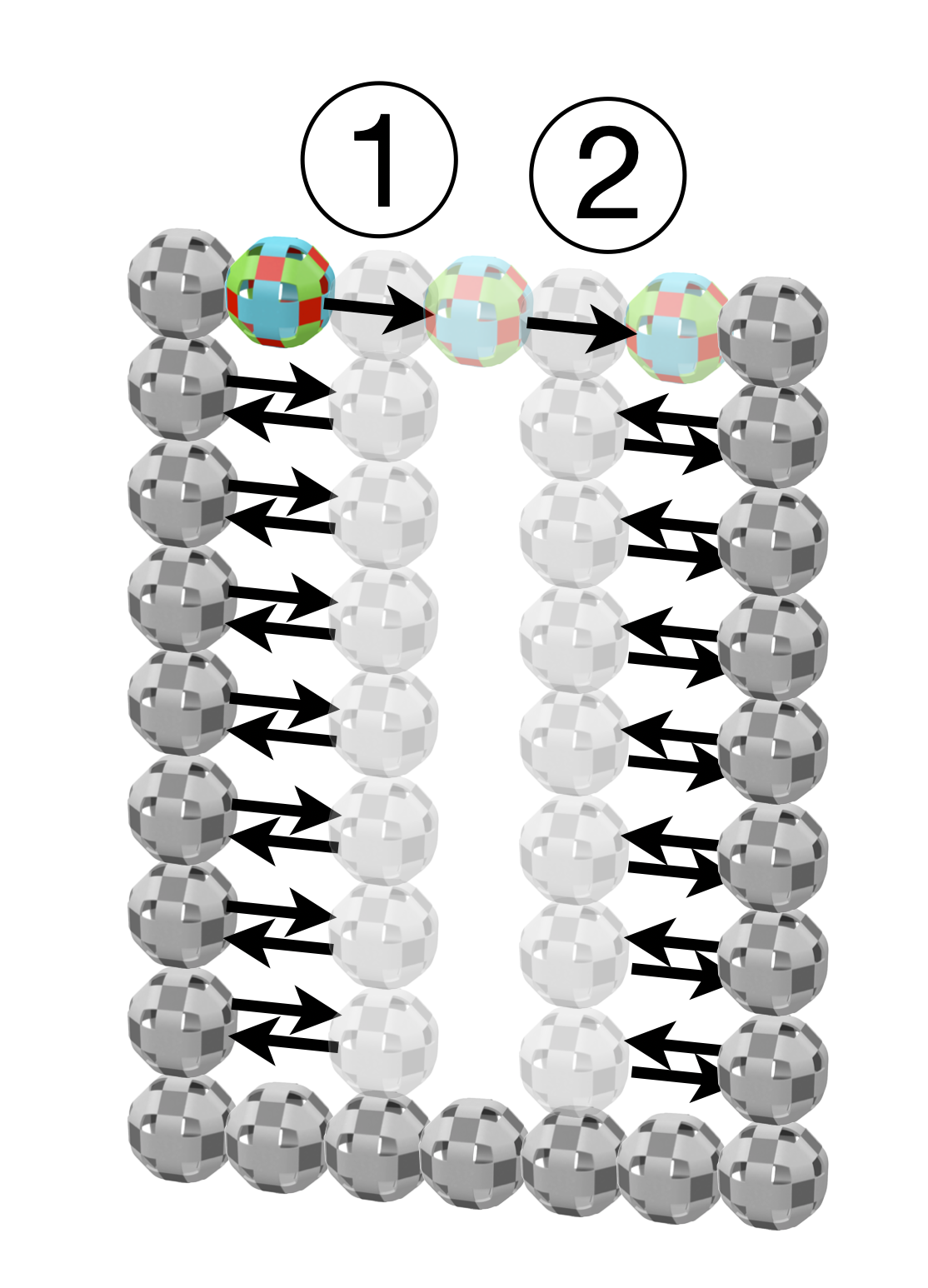}
		\caption{}
		\label{fig:example-b}
	\end{subfigure}\hfil
	\begin{subfigure}[b]{0.24\textwidth}
		\centering
		\includegraphics[width=0.8\textwidth]{./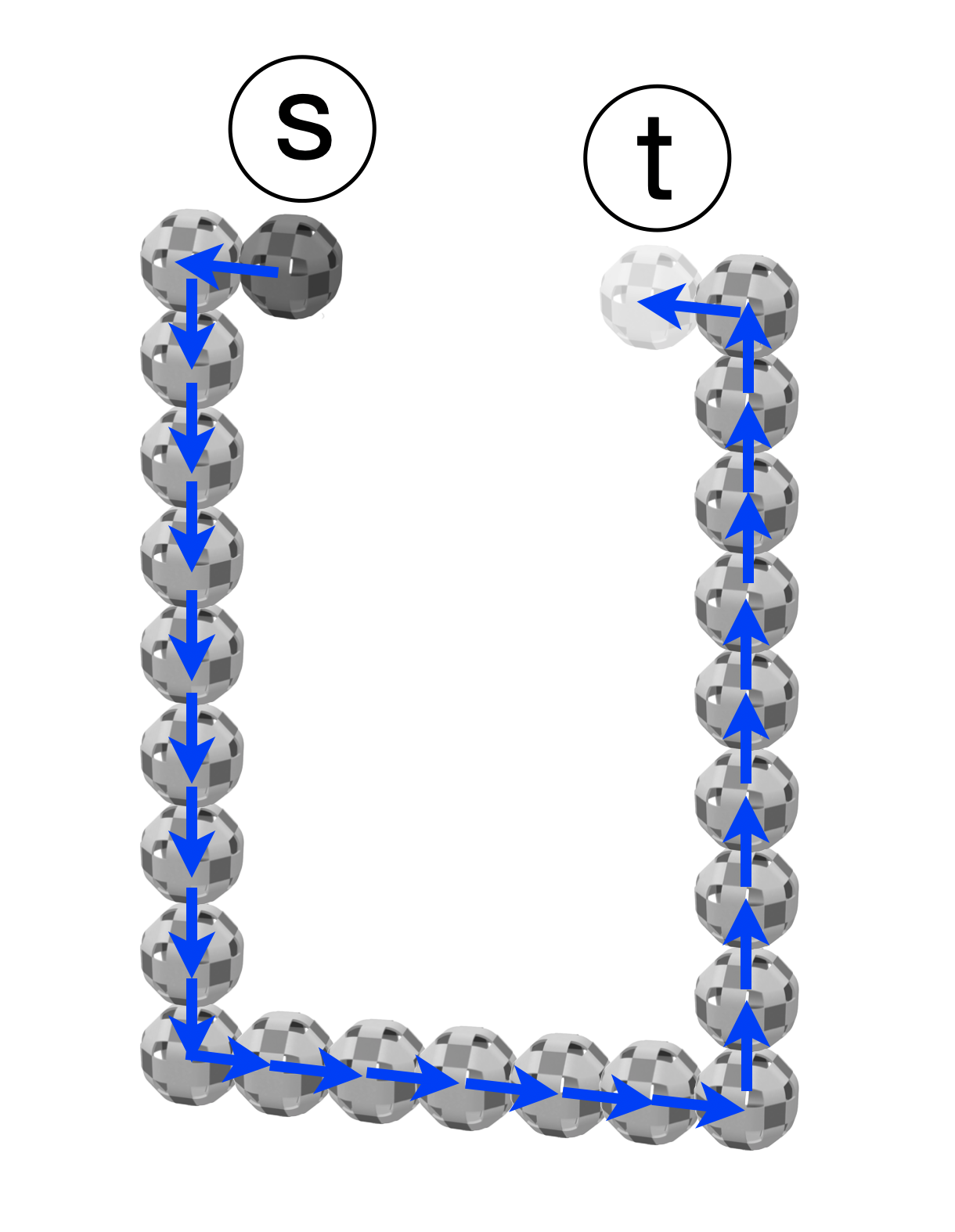}
		\caption{}
		\label{fig:example-c}
	\end{subfigure}
	\begin{subfigure}[b]{0.24\textwidth}
		\centering
		\includegraphics[width=\textwidth]{./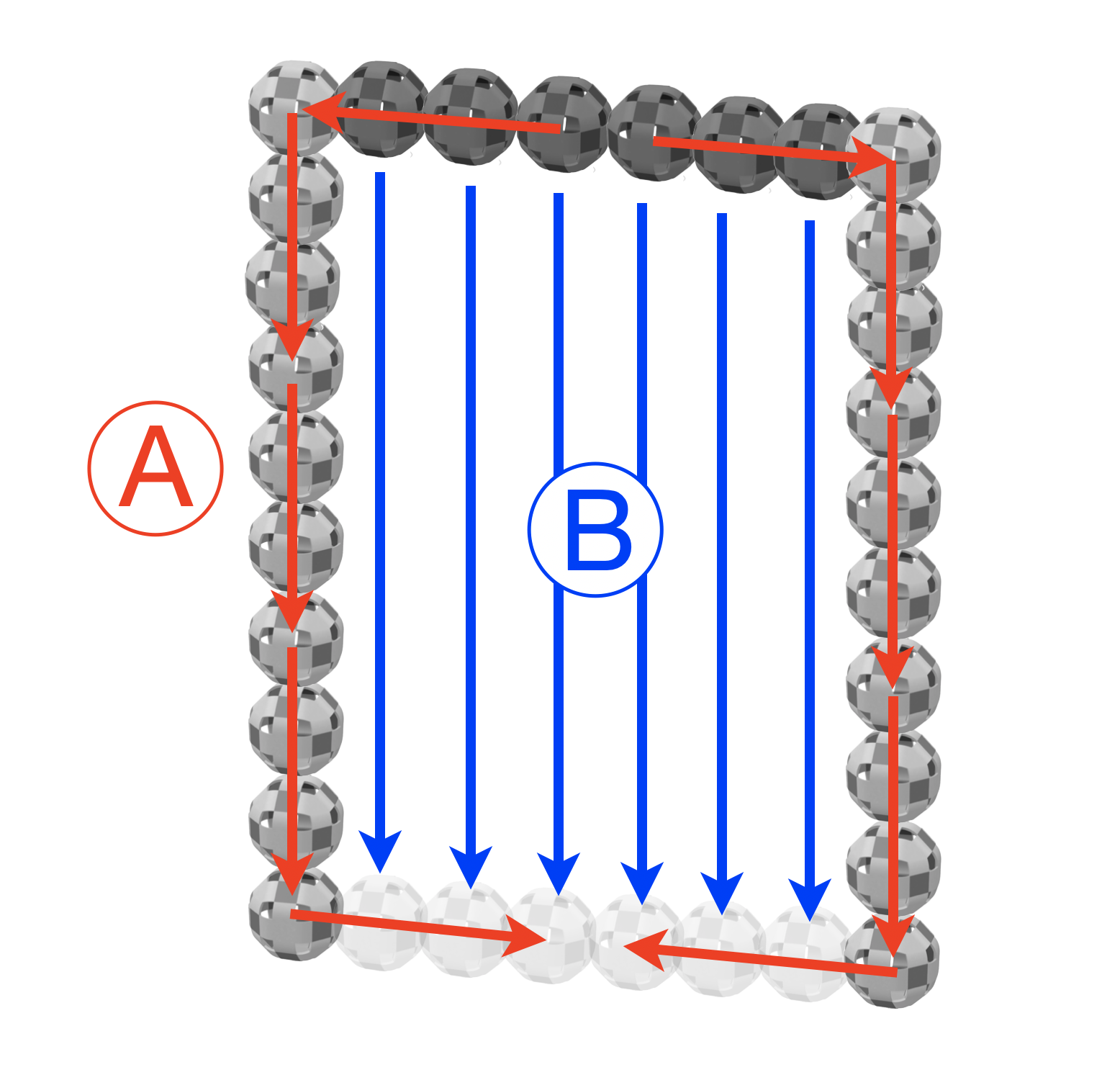}
		\caption{}
		\label{fig:example-d}
	\end{subfigure}
	\caption{
		Reconfiguration with and without connectivity constraints.
		(a) Relocating the colored particle from $s$ to $t$, without (red trajectory A) and with connectivity constraint (blue trajectory B).
		(b) Coordinating many particles to quickly deliver a specific particle to a desired location, while preserving connectivity. 
		Both vertical lines of robots move in parallel towards each other, staying connected to the horizontal line.
		(c) Reconfiguring an arrangement of \emph{identical} particles in a single, parallel, connected step. 
		In this case, it is possible that the whole arrangement rotates one step, as the robots are indistinguishable; the top-rightmost robot moves onto the goal location and all other robots follow.
		(d) Reconfiguring an arch-shaped arrangement of identical particles into a U-shaped one, without (motion plan A, shown in red) and with connectivity (motion plan B, shown in blue). 
		Motion plan A splits the configuration, resulting in a makespan of length half the amount of robots in the top horizontal row. 
		In motion plan B everything stays connected, but the horizontal row has to move all the way to the bottom, what results in longer makespan.}
	\label{fig:example}
\end{figure}

\subsection{Our Results}
We provide a spectrum of new results for questions arising from efficiently reconfiguring a connected, unlabeled collective of robots from a start configuration~$C_s$ into a target configuration $C_t$, aiming for minimizing the overall makespan and maintaining connectivity in each step.
We obtained all of our results in obstacle-free \num{2}D space, and discuss possible extensions as well as open questions in the conclusion.
Some of these results are featured in~a~video~\cite{unlabeled-connected-video}.

\begin{itemize}
	\item We show that deciding whether there is a schedule with a makespan of $1$ transforming $C_s$ into $C_t$ can be done in polynomial time, see~\cref{lem:single-move-reconfiguration}.
	\item We show that deciding whether there is a schedule with a makespan of $2$ transforming $C_s$ into $C_t$ is \NP-complete, see~\cref{thm:connected-motion-planning-hard}. 
	This implies \NP-hardness of approximating the minimum makespan within a constant of $(\frac{3}{2}-\varepsilon)$, for any $\varepsilon>0$, see~\cref{cor:connected-motion-planning-optmial-hard}.
	\item As our main result, we show that there is a constant $c^*$ such that for any pair of start and target configurations with a (generalized) scale of at least $c^*$, a schedule with constant stretch can be computed in polynomial time, see~\cref{thm:scaled-instances-bounded-stretch,,cor:non-overlapping-instances}. 
	This~implies that there is a constant-factor approximation for the problem of computing schedules with minimal makespan restricted to pairs of start and target configurations with a scale of at least $c^*$, see~\cref{cor:scaled-instances-constant-factor-approx}.
\end{itemize}

\subsection{Related Work}
Coordinating the motion of many agents plays a central role when dealing with large numbers of moving robots, vehicles, aircraft, or people.
How can each agent choose an efficient route that avoids collisions with other agents as they simultaneously move to their destinations?
These basic questions arise in many applications, such as ground swarm robotics~\cite{rubenstein2014programmable,sw-sr-08}, aerial swarm robotics~\cite{cpdsk-saesr-18,kumar}, air traffic control~\cite{delahaye2014mathematical}, and vehicular traffic networks~\cite{fhtwhfe-mift-11,ss-hbtn-04}.

Multi-robot coordination dates back to the early days of robotics and computational geometry.
The seminal work by Schwartz and Sharir~\cite{ss-pmpcbpb-83} from the 1980s considers coordinating the motion of disk-shaped objects among obstacles. 
Their algorithms are polynomial in the complexity of the obstacles, but exponential in the number of disks.
Hopcroft et al.~\cite{hss-cmpmio-84} and Hopcroft and Wilfong~\cite{hw-rmompgs-86} proved \PSPACE-completeness of moving multiple robots to a target configuration, showing the significant challenge of coordinating many robots.

There is a vast body of other related work dealing with multi-robot motion planning, both from theory and practice. 
For a more extensive overview, see~\cite{dfk+-arxiv}. 
In both discrete and geometric variants of the problem, the objects can be \emph{labeled}, \emph{colored} or \emph{unlabeled}.
In the \emph{labeled} case, the objects are all distinguishable and each object has its own, uniquely defined target position.
In the \emph{colored} case, the objects are partitioned into $k$ groups and each target position can only be covered by an object with the right color.
This was considered by Solovey and Halperin~\cite{sh-kcmrmp-14}, who present and evaluate a practical sampling-based algorithm.
In the \emph{unlabeled} case, objects are indistinguishable and target positions can be covered by any object.
This was first considered by Kloder and Hutchinson~\cite{kh-ppimf-06}, who presented a practical sampling-based algorithm.
Turpin~et~al.~\cite{tmk-tpams-13} give an algorithm for finding a solution in polynomial time, if one exists.
This is optimal with respect to the longest distance traveled by any one robot, but only holds for disk-shaped robots under additional restrictive assumptions on the free space.
For unit disks and simple polygons, Adler et al.~\cite{adh+-emmpudsp-15} provide a polynomial-time algorithm under the additional assumption that the start and target positions have some minimal distance from each other.
Under similar separability assumptions, Solovey et al.~\cite{syz+-mpudog-15} provide a polynomial-time algorithm that produces a set of paths that is no longer than $\mbox{OPT}+4m$, where $m$ is the number of robots, and $\mbox{OPT}$ denotes the total length of a set of paths of an optimal solution.
However, they do not consider the makespan, but only the total path length.
On the negative side, Solovey and Halperin~\cite{sh-hummp-15} prove that the unlabeled multiple-object motion planning problem is \PSPACE-hard, even when restricted to unit square objects in a polygonal environment.

For an extensive overview of multi-agent path planning, refer to~\cite{SternSFK0WLA0KB19}.
Yu and LaValle~\cite{YuL12} discuss the relationship of multi-agent path planning and flow problems in \emph{collision-free unit-distance graphs}. 
These are unit-distance graphs having the additional property that two discs of radius $\sqrt{2}/4$ do not collide when traveling with unit speed on two paths that do not contain the same vertex for the same time step.
They consider different problems settings and show, among other results, that if the goal locations can be assigned to arbitrary agents, a solution always exists and the longest path has at most $n+V-1$ edges, where $n$ and $V$ are the number of agents and vertices of the graph, respectively.
They also mention \NP-hardness of the decision problem in case goals are pre-assigned to agents.
Charrier et al.~\cite{CharrierQSS19,CharrierQSS19A,CharrierQSS20} study reachability and coverage planning problems for connected agents. 
They show different complexity results for different \emph{topological graphs}, consisting of vertices and two kinds of edges, one for robot motion, the other for communication. 
They also introduce \emph{sight-movable graphs}, which are undirected topological graphs for which there is a movable path between any pair of vertices if these vertices can communicate with each other.
They show that this class admits efficient algorithms for the reachability and coverage problem.
Queffelec, Sankur, and Schwarzentruber~\cite{QueffelecSS23} study the connected multi-agent path finding problem in partially known environments in which the  graph is not known entirely in advance, and show \PSPACE-completeness of the problem.
Despite the fact that all of this is related to our work, a crucial difference is that we consider the stretch factor as the main performance measure.

There is also a wide range of practical related work.
Self-configuration of robots as active agents was studied by Naz et al.~\cite{naz2016distributed}.
A basic model in which robots are used as building material was introduced by Derakhshandeh et al.~\cite{derakhshandeh2015algorithmic,derakhshandeh2016universal}.
This resembles Claytronics robots like Catoms, see Goldstein and Mowry~\cite{goldstein2004claytronics}.
In more recent work, Thalamy et al.~\cite{thalamy2019distributed} consider using scaffolding structures for asynchronous reconfiguration.

For an instance of parallel reconfiguration, a lower bound for the time required for \emph{all} robots to reach their destinations is the time it takes to move just \emph{one} robot to its destination in the absence of other robots, i.e., by the maximum distance between a robot's origin and destination.
Moving a dense arrangement of robots to their destinations while avoiding collisions may require substantially more time than this lower bound.
This motivates the \emph{stretch factor}, which is defined to be the ratio of the time taken by a parallel motion plan divided by the simple lower bound.

In recent work, Demaine et al.~\cite{dfk+-arxiv} provide several fundamental insights into these problems of coordinated motion planning for the scenario with labeled robots without a connectivity constraint. 
They were able to develop algorithms that (under relatively mild assumptions on the separation between robots) can achieve \emph{constant} stretch factors that are independent of the number of robots. 
Thus, these algorithms provide an absolute performance guarantee on the makespan of the parallel motion schedule, which implies that the schedule is a constant-factor approximation of the best possible schedule.
For densely packed arrangements of robots (without separation assumptions), they prove that a constant stretch factor is no longer possible, and give upper and lower bounds on the worst-case stretch factor. 
Note that the approaches of~\cite{dfk+-arxiv} cannot be adapted, thus our demand for connectivity requires new algorithmic ideas.

\smallskip
In the methods developed in~\cite{coordinated_video,dfk+-arxiv}, elementary pieces can achieve arbitrary relative configurations, as shown in \cref{fig:coordinated}, in which colors indicate the final destinations of~components. 

\begin{figure}[ht]
	\centering
	\begin{subfigure}[b]{0.24\textwidth}
		\centering
		\includegraphics[width=0.95\textwidth]{./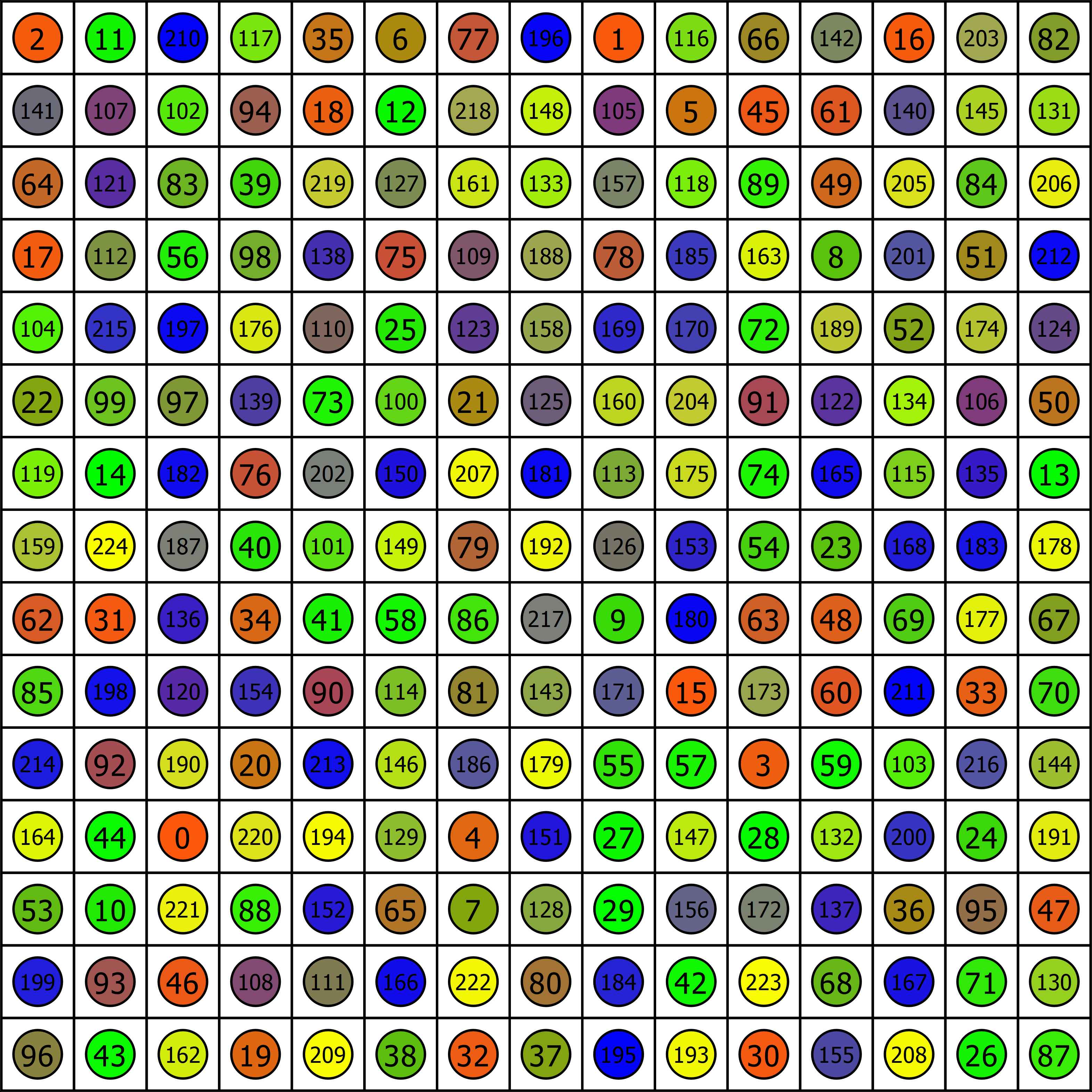}
		\caption{}
	\end{subfigure}\hfil
	\begin{subfigure}[b]{0.24\textwidth}
		\centering
		\includegraphics[width=0.95\textwidth]{./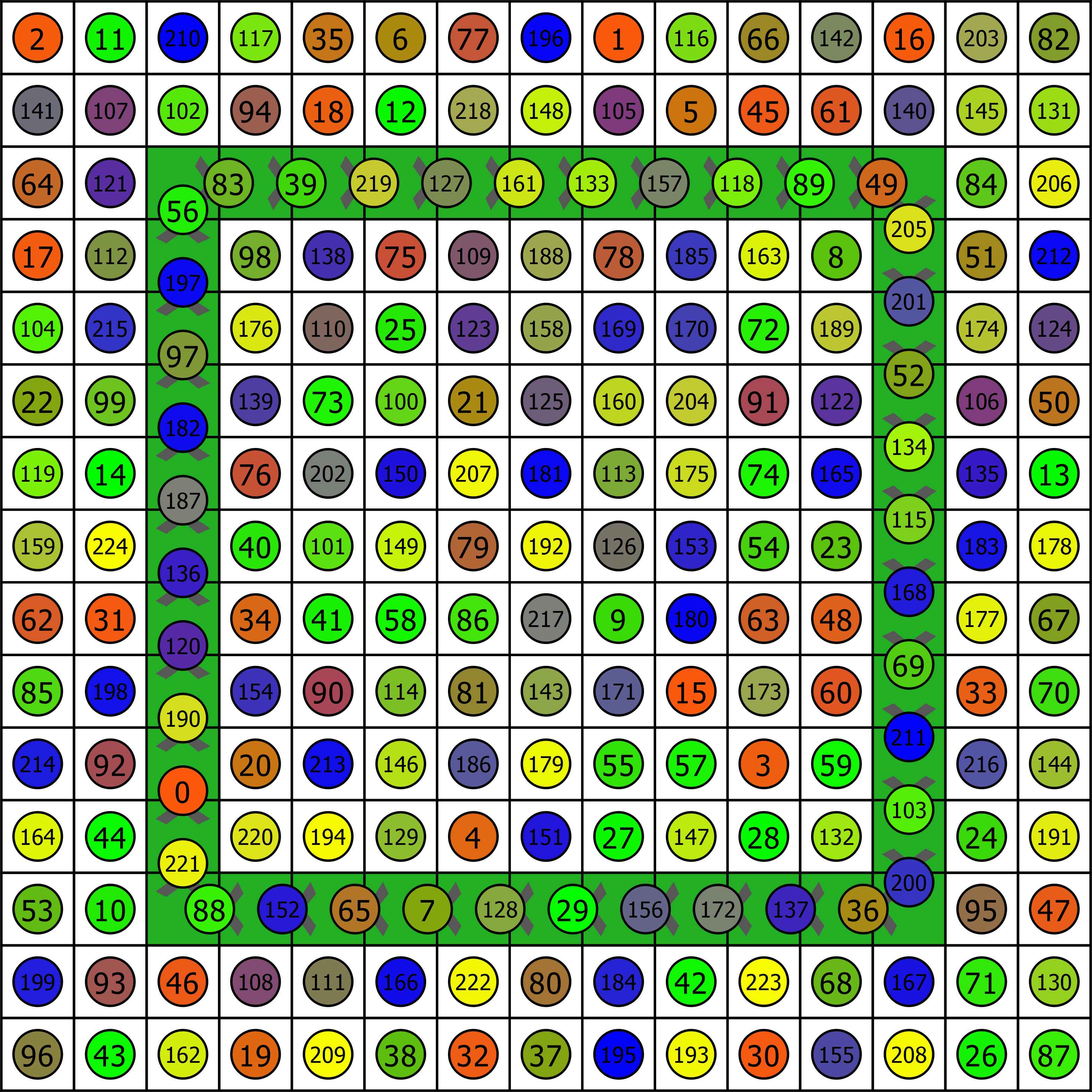}
		\caption{}
	\end{subfigure}\hfil
	\begin{subfigure}[b]{0.24\textwidth}
		\centering
		\includegraphics[width=0.95\textwidth]{./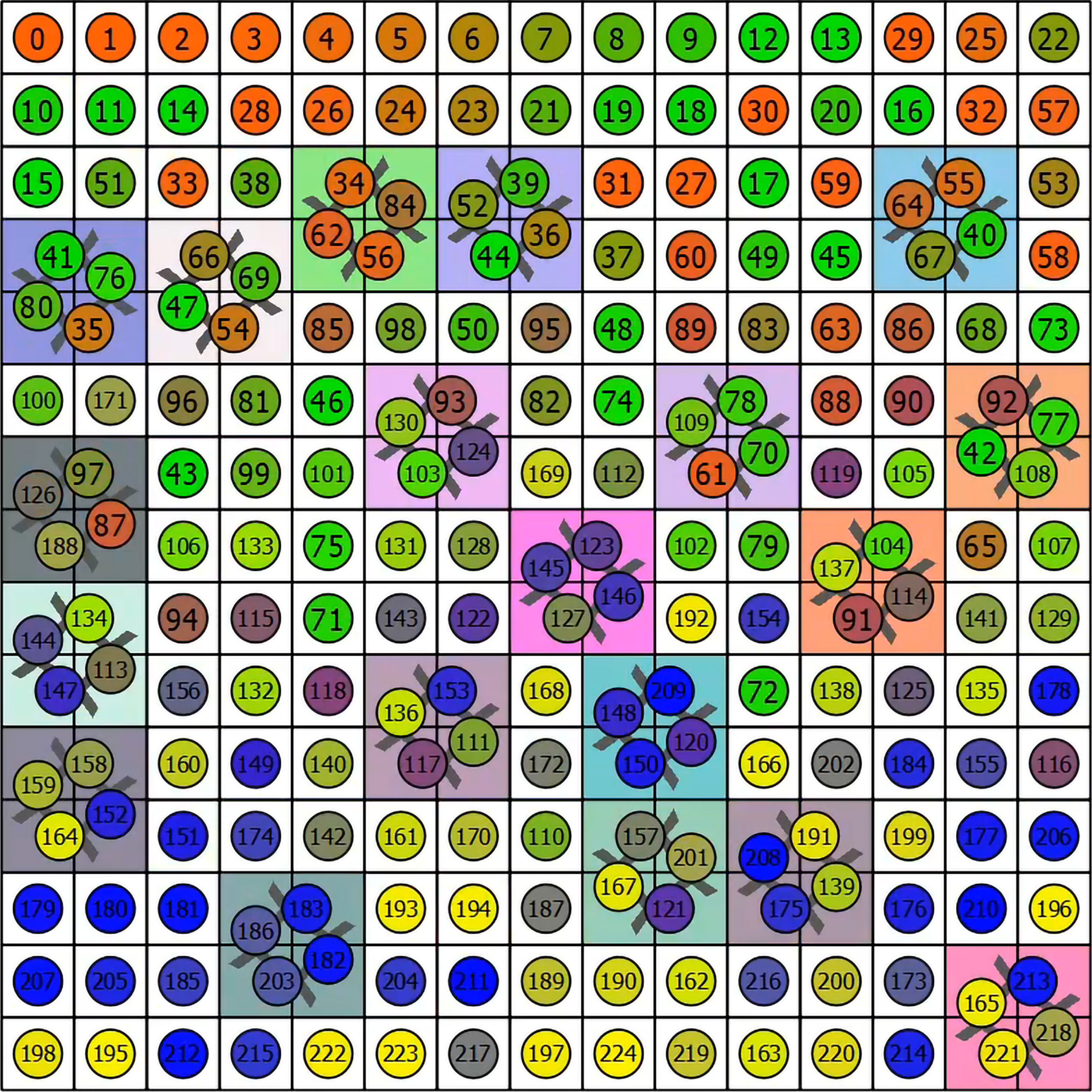}
		\caption{}
	\end{subfigure}\hfil
	\begin{subfigure}[b]{0.24\textwidth}
		\centering
		\includegraphics[width=0.95\textwidth]{./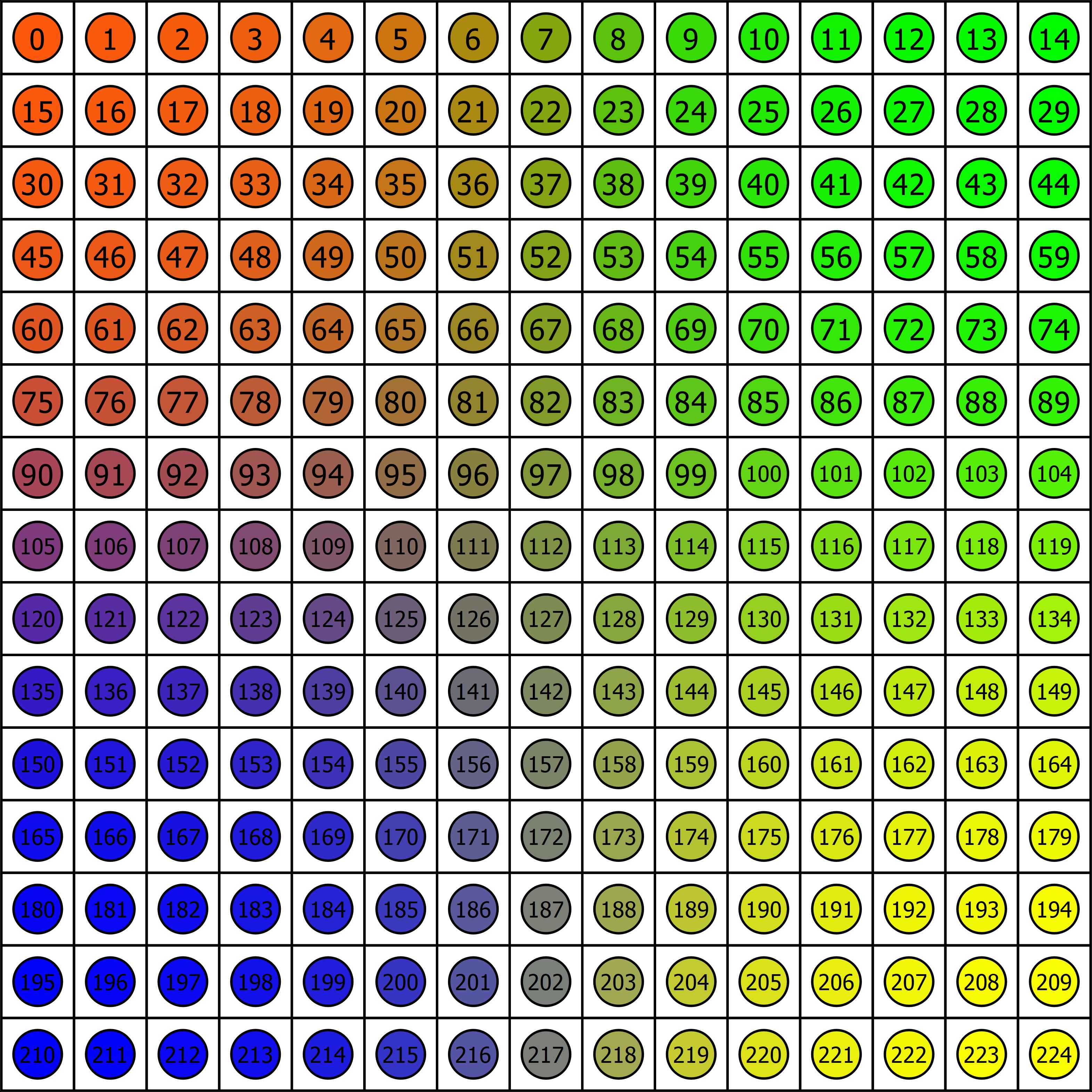}
		\caption{}
	\end{subfigure}
	\caption{Parallel reconfiguration, established by~\cite{dfk+-arxiv}:
		(a) Start configuration. 
		(b) A feasible, parallel reconfiguration move. 
		(c) Parallel reconfiguration moves. 
		(d) Target configuration.
		See \protect\url{https://www.ibr.cs.tu-bs.de/users/fekete/Videos/CoordinatedMotionPlanning.mp4} for a video~\cite{coordinated_video}.}
	\label{fig:coordinated}
\end{figure}

As discussed in~\cite{coordinated_video,dfk+-arxiv}, it is straightforward to adapt this to geometric reconfigurations: ``Filled'' pixels (corresponding to material) get assigned one color (e.g., ``black''), while ``empty'' pixels (corresponding to void) get a second color (e.g., ``white''). 
However, in real-world scenarios (such as in space, or for swarm robots as shown in \cref{fig:catoms,fig:example}), basic components need to stay connected to keep them from drifting apart. 
How can we develop efficient parallel \emph{connected} reconfiguration schedules, in which all ``black'' pixels remain connected throughout the process? 
(Note that we aim for efficient schedules, i.e., \emph{constant stretch}, so approaches like the recent one by Akitaya et al.~\cite{musk21} for transforming any arrangement into the same base configuration are insufficient.)

The concept of scale complexity has received a considerable amount of attention in the context of self-assembly; in all settings, achieving constant scale has required special cases or operations. 
As shown by Soloveichik and Winfree~\cite{soloveichik2007complexity}, the minimal number of distinct tile types necessary to self-assemble a shape, at some scale, can be bounded both above and below in terms of the shape’s Kolmogorov complexity, leading to unbounded scale in general. 
As shown by Demaine et al.~\cite{demaine2011self}, allowing additional operations that allow destroying tiles can be exploited to achieve a scale that is only bounded by a logarithmic factor, beating the linear bound without such operations.
In a setting of recursive, multi-level \emph{staged} assembly with a logarithmic number of stages (i.e., ``hands'' for handling subassemblies), Demaine~et~al.~\cite{ddf-ssnas-08} achieved logarithmic scale, and constant scale for more constrained classes of polyomino shapes; this was later improved by Demaine et al.~\cite{dfs+-ngafc-17} to constant scale for a logarithmic number of stages.
More recently, Luchsinger et al.~\cite{luchsinger2019self} employed repulsive forces between tiles to achieve constant scale in two-handed self-assembly.

\section{Preliminaries}
We consider \emph{robots} at integer grid positions.
A set of $n$ unlabeled robots forms a \emph{configuration}~$C$, corresponding to a vertex-induced subgraph $H$ of the infinite integer grid, with an edge between two grid vertices $v_1,v_2 \in C$ if and only if $v_1$ and $v_2$ are on adjacent grid positions, i.e., a distance of 1 apart.  
A configuration is \emph{connected}, if $H$ is connected.  
Two~configurations~$C_1$ and $C_2$ \emph{overlap}, if they have at least one position in common. 
Two~robots are \emph{adjacent} if their positions $v_1,v_2$ are adjacent, and \emph{diagonally adjacent} if their positions are adjacent with a common vertex $v$ such that $(v_1,v)$ and $(v,v_2)$ lie orthogonal.

A robot can {move} in discrete time steps by changing its location from a grid position $v$ to an adjacent grid position $w$; denoted by $v \rightarrow w$. Two moves $v_1 \rightarrow w_1$ and $v_2 \rightarrow w_2$ are called \emph{collision-free} if $v_1 \neq v_2$ and $w_1 \neq w_2$. 
A \emph{transformation} between two configurations $C_1 = \{ v_1,\dots,v_n\}$ and $C_2 = \{ w_1,\dots,w_n \}$ is a set of collision-free moves $\{ v_i \rightarrow w_i \mid i = 1,\dots,n \}$. 
Furthermore, robots are allowed to remain in their current position, i.e., we allow the move $v\rightarrow v$.
However, even though this is unproblematic for the unlabeled scenario, we do not allow \emph{swaps} between robots, i.e., the two parallel moves $v_1\rightarrow v_2$ and $v_2\rightarrow v_1$ cause a~collision.
	
For $M \in \mathbb{N}$, a \emph{schedule} is a sequence $C_1 \rightarrow \cdots \rightarrow C_{M+1}$ (also denoted as $C_1 \rightrightarrows C_{M+1}$) of transformations, with a \emph{makespan} of~$M$. 
A \emph{stable schedule} $C_1 \rightrightarrows_{\chi} C_{M+1}$ uses only connected configurations.  
Let $C_{s}, C_{t}$ be two connected configurations with equally many robots called \emph{start} and \emph{target configuration}, respectively. 
A \emph{matching} is a one-to-one mapping between vertices from $C_s$ and~$C_t$. 
The \emph{diameter} of a matching is the maximal Manhattan distance between two matched vertices.
A \emph{bottleneck matching} is, among all possible matchings, a matching with minimal diameter. 
The \emph{diameter} $d$ of $(C_s,C_t)$ is the diameter of a bottleneck matching. 
The \emph{stretch (factor)} of a (stable) schedule is the ratio between its makespan $M$ and the diameter $d$ of~$(C_s,C_t)$.

\smallskip
We consider the \textsc{Connected Coordinated Motion Planning Problem} that is stated as follows; see~\cref{fig:example} for an illustration. 
Given a pair $(C_s, C_t)$ of unlabeled connected start and target configurations, and an integer $k$, we are ask to decide whether there is a stable schedule with a makespan of $k$ that transforms $C_s$ into $C_t$.
\section{Makespan 1 and 2}\label{sec:NPhigh}

As a first observation we note that it can be decided in polynomial time whether there is a schedule $C_s \rightarrow C_t$ with a makespan of 1 between a start and a target configuration. 

\begin{theorem}\label{lem:single-move-reconfiguration}
	For a pair of configurations $C_s$ and $C_t$, each with $n$ vertices, it can be decided in polynomial time whether there is a schedule with a makespan of $1$ transforming $C_s$ into $C_t$.
\end{theorem}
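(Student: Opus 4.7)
The plan is to reduce the decision problem to bipartite perfect matching. Observe that a makespan-$1$ schedule transforming $C_s$ into $C_t$ is, by the definition in the preliminaries, a collection of simultaneous moves $\{v_i \rightarrow w_i\}_{i=1}^{n}$ in which every robot at some $v_i \in C_s$ moves to a position $w_i \in C_t$ with either $v_i = w_i$ or $v_i$ and $w_i$ grid-adjacent, and where collision-freeness demands $v_i \neq v_j$ and $w_i \neq w_j$ for $i \neq j$. Since the vertices of $C_s$ are already pairwise distinct, distinctness of the $v_i$ is automatic; requiring the $w_i$ to be pairwise distinct and to collectively cover all of $C_t$ means precisely that the moves constitute a perfect matching between $C_s$ and $C_t$ using only pairs at Manhattan distance at most one.

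Guided by this observation, I would construct an explicit bipartite graph $B$ with vertex classes $C_s$ and $C_t$ (treated as disjoint copies in case the two configurations overlap geometrically) and an edge joining $v \in C_s$ to $w \in C_t$ exactly when $v = w$ or $v$ and $w$ are grid-adjacent. The claim is then that a schedule with makespan $1$ exists if and only if $B$ admits a perfect matching: from any perfect matching one reads off the required set of moves directly, while conversely every valid schedule yields such a matching by the argument above. No separate check on intermediate connectivity is needed, since the only integer times involved are $0$ and $1$, at which the configurations equal $C_s$ and $C_t$, respectively, and both are connected by assumption.

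Finally, I would invoke a standard polynomial-time algorithm for bipartite perfect matching, for instance Hopcroft--Karp in $\mathcal{O}(\sqrt{|V|}\cdot|E|)$ time. The graph $B$ has $2n$ vertices and at most $5n$ edges, since each $v \in C_s$ has at most five candidate partners in $C_t$, namely $v$ itself together with its four axis-aligned grid neighbors. Consequently the entire procedure runs in polynomial time in $n$.

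The principal obstacle I anticipate is conceptual rather than computational: one has to verify carefully that the collision-freeness definition from the preliminaries---which forbids only coincidences of starts or of ends, and in particular does \emph{not} rule out two robots swapping along an edge---really coincides with the combinatorial condition of a perfect matching in $B$. Once this equivalence is established, the reduction to bipartite matching is immediate.
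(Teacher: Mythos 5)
Your proposal is correct and follows essentially the same route as the paper: both reduce the question to the existence of a perfect matching in the bipartite graph joining positions of $C_s$ to positions of $C_t$ that are identical or grid-adjacent, and both invoke Hopcroft--Karp on this sparse graph. Your explicit remark that the paper's collision-freeness definition permits swaps along an edge is a careful observation that the paper's own proof passes over more quickly, but it does not change the argument.
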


\begin{proof}
	Given two connected configurations $C_s$ and $C_t$, each with $n$ vertices. 
	We compute the bipartite graph $G_{C_s,C_t}=(V_s \cup V_t, E)$, where $V_s$ and $V_t$ consist of all occupied positions in $C_s$ and $C_t$. 
	For $E$, we add an edge if and only if an occupied position in $C_t$ is adjacent (or identical) to an occupied position in $C_s$. 
	
	Consider a perfect matching in $G_{C_s,C_t}$. 
	By construction, the edges in $G_{C_s,C_t}$ only connect positions from start and target configuration that are at most one unit step apart. 
	As there is a perfect matching, no two robots want to occupy the same position. 
	Furthermore, two paths of length $1$ can only cross at a common vertex; this cannot happen, as we consider a perfect matching; therefore, all robots can move along there respective matching edges in parallel without collisions.
	
	If there is no perfect matching in $G_{C_s,C_t}$, at least one robot would have to move to a position further away. 
	Thus, a makespan of 1 would not be achievable. 
	So, there is a schedule of makespan 1 if and only if $G_{C_s,C_t}$ admits a perfect matching. 
	Because the graph is sparse, this can be checked in $\mathcal{O}(n^{1.5})$ time, using the method of Hopcroft and Karp~\cite{hk-ammbg-73}.
\end{proof}

Note that, because $C_s$ and $C_t$ have to be connected, a schedule with a makespan of~$1$ is always stable. 
Furthermore, it is easy to see that the method described in the proof of~\cref{lem:single-move-reconfiguration} can be applied iteratively to verify whether a suggested schedule with makespan~$k\in \mathbb{N}$ is stable, and indeed transforms the given start configuration in the respective goal configuration. 
Hence, we conclude that the problem is contained in \NP.

\begin{corollary}
	For any pair of configurations, any given (stable) schedule with makespan $k\in \mathbb{N}$ can be verified in polynomial time, i.e., this problem is in \NP.
\end{corollary}

However, even for a makespan of~$2$, the same problem becomes provably difficult.
We show the following theorem.

\begin{restatable}{theorem}{complexitytheorem}\label{thm:connected-motion-planning-hard}
	For a pair of configurations $C_s$ and $C_t$, each with $n$ vertices,
	deciding whether there is a stable schedule with a makespan of $2$ transforming $C_s$ into $C_t$ is \NP-complete.
\end{restatable}

The proof is based on a reduction from the \NP-hard problem \textsc{Planar Monotone 3Sat}~\cite{dbk-obspp-10}, which asks to decide whether a Boolean 3-CNF formula~$\varphi$ is satisfiable, for which in each clause the literals are either all unnegated or all negated. 
Note that a general instance of \textsc{Planar Monotone 3Sat} may contain clauses with three literals as well as clauses with only two literals.

For the following, we refer to~\cref{fig:hardness-reduction-sketch}. The reduction works as follows: For~every instance $\varphi$ of \textsc{Planar Monotone 3Sat}, we construct an instance~$I_\varphi$, consisting of a start configuration $C_s$ and a target configuration $C_t$. In the figure, we use three different colors to indicate occupied positions in the start configuration (red), in the target configuration (dark cyan), and in both configurations (gray). Therefore, 
we consider a rectilinear planar embedding 
of the variable-clause incidence graph $G_{\varphi}$ 
of ${\varphi}$ where the variable vertices are placed horizontally in row, and clauses containing unnegated and negated literals are placed above and below, respectively. All variables
of ${\varphi}$ are represented by a horizontal \emph{variable gadget} (light red).
Furthermore, we position two additional \emph{auxiliary gadgets} (light blue) at the top and
at the bottom boundary of the instance, which are connected to the variable
gadget via bridges at the right boundary.
There will be a
\emph{separation gadget} (yellow) between each adjacent and nested pair of \emph{clause
	gadgets} (blue). All clause gadgets are connected via bridges to separation gadgets
and possibly to the auxiliary gadgets. Further, there are bridges from a
clause gadget to the respectively contained variables. 

\begin{figure}[p]
	\centering
	\includegraphics[width=\linewidth, page=1]{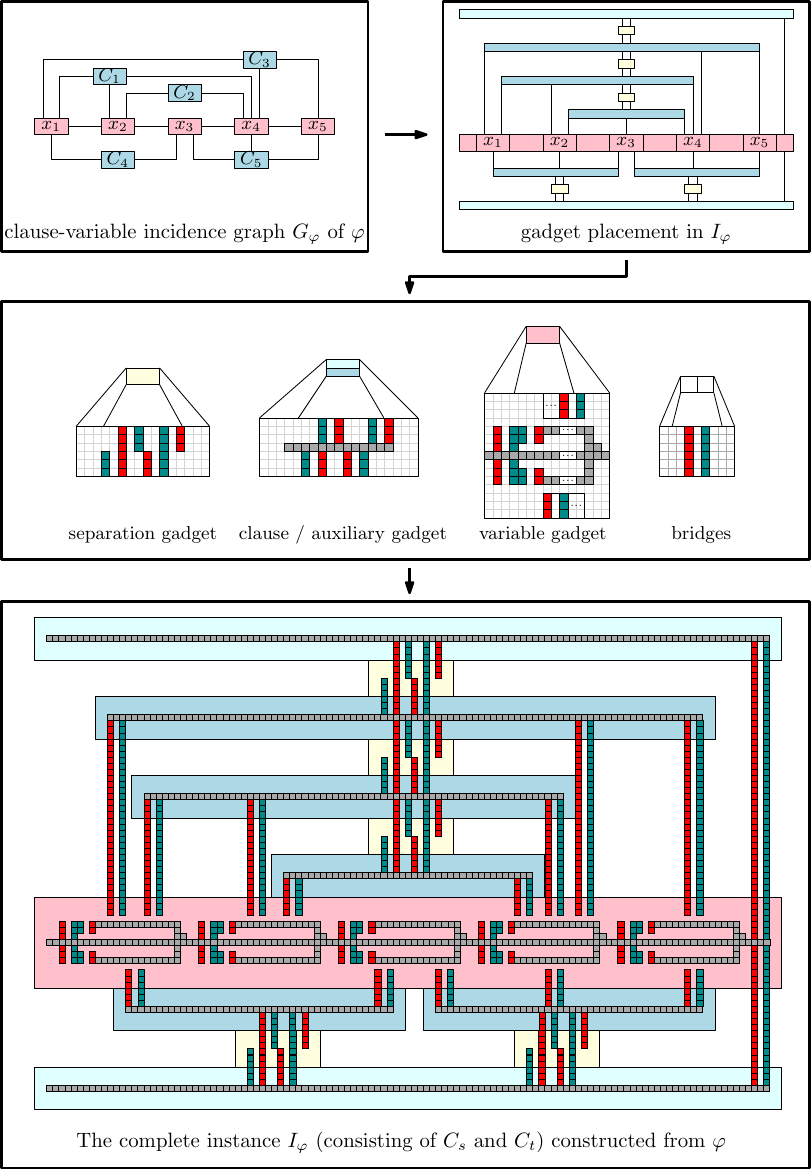}
	\caption{Symbolic overview of the \NP-hardness reduction. The depicted instance is due to the \textsc{Planar Monotone 3Sat} formula
		$\varphi = (x_1 \vee x_2 \vee x_4) \wedge (x_2 \vee x_4) \wedge (x_1
		\vee x_4 \vee x_5) \wedge
		(\overline{x_1} \vee \overline{x_3}) \wedge (\overline{x_3}
		\vee \overline{x_4} \vee \overline{x_5})$. We use three different colors to indicate occupied positions in the start configuration (red), in the target configuration (dark cyan), and in both configurations (gray).}
	\label{fig:hardness-reduction-sketch}
\end{figure}

There is a stable schedule for $I_{\varphi}$ transforming the start configuration $C_s$ into the target configuration $C_t$ with a makespan of $2$, if and only if $\varphi$ is satisfiable. 
In order to transform $C_s$ into $C_t$, the separation gadgets ensure that in the single intermediate configuration, all clause and auxiliary gadgets are disconnected from each other. 
Therefore, to satisfy the connectivity constraint, some robots of the variable gadget need to move to provide connectivity between the variable gadget and the clause gadgets. 
At the same time, our construction ensures that robots representing a variable can either keep connectivity to their unnegated or to their negated literal containing clauses within a makespan of $2$, because otherwise the connectivity within the variable gadget would be broken; thus, these movements can be used to determine a valid variable assignment for $\varphi$.

\subsection{Construction of the Gadgets}\label{sec:gadget-construction}

We start with some observations regarding the target neighborhood of a given
robot, i.e., the positions that can be reached by moving a robot twice from a
given start position. 

\begin{observation}\label{obs:target-neighborhood}
	For a given position, the set of positions reachable within two moves is given by \cref{fig:target-neighorhood-bridges-a}.
\end{observation}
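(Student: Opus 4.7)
The plan is to derive the set of two-move reachable positions by iterating the one-move reachability rule and then checking that the resulting set is exactly the Manhattan ball of radius~$2$ around the starting position, which is the ``diamond'' shape depicted in the referenced figure.

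First I would fix the start position at the origin $v=(0,0)$ without loss of generality, and recall from the definition of a move in the Preliminaries that in a single step a robot either stays put or moves to one of the four grid-adjacent cells. Hence the set $R_1(v)$ of positions reachable in one move is exactly $\{(0,0),(\pm 1,0),(0,\pm 1)\}$, i.e., the Manhattan ball of radius $1$.

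Next, by definition the set $R_2(v)$ of positions reachable in two moves is $\bigcup_{w\in R_1(v)} R_1(w)$. By the triangle inequality for the Manhattan norm, every $u\in R_2(v)$ satisfies $\|u-v\|_1\le 2$, so $R_2(v)$ is contained in the Manhattan ball of radius~$2$. For the reverse inclusion, I would simply exhibit, for each of the $13$ points $u$ with $\|u\|_1\le 2$, an intermediate waypoint $w\in R_1(v)$ from which $u$ is one step away: for $u=(\pm 2,0)$ take $w=(\pm 1,0)$; for $u=(0,\pm 2)$ take $w=(0,\pm 1)$; for $u=(\pm 1,\pm 1)$ either axis-aligned neighbor of the origin works; and the points of $R_1(v)$ themselves are reached by first staying put (or by moving and returning).

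This gives $R_2(v)=\{u\in\mathbb{Z}^2:\|u-v\|_1\le 2\}$, exactly the diamond shown in the left part of Figure~\ref{fig:target-neighorhood-bridges}. There is no real obstacle here; the only thing to be careful about is remembering that ``holding position'' counts as a legal move, so that the one-move neighborhood contains $v$ itself and consequently $R_1(v)\subseteq R_2(v)$.
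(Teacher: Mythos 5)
Your proof is correct: the reachable set is indeed the Manhattan ball of radius~$2$ (the $13$-cell diamond in the figure), and your argument---containment via the triangle inequality plus explicit waypoints for the reverse inclusion, noting that holding position is a legal move---is the natural verification. The paper states this as an observation without any proof, so your write-up simply makes explicit what the authors treat as immediate from the definition of a move.
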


Because we can swap the roles of the start and the target
configurations, an easy consequence is that if there is only one start position in the reachable
neighborhood of a target position, then the robot occupying this position in
the target configuration is uniquely~defined.

\begin{figure}[htb]
	\centering
	\begin{subfigure}{.2\textwidth}
		\includegraphics[scale=0.5, page=2]{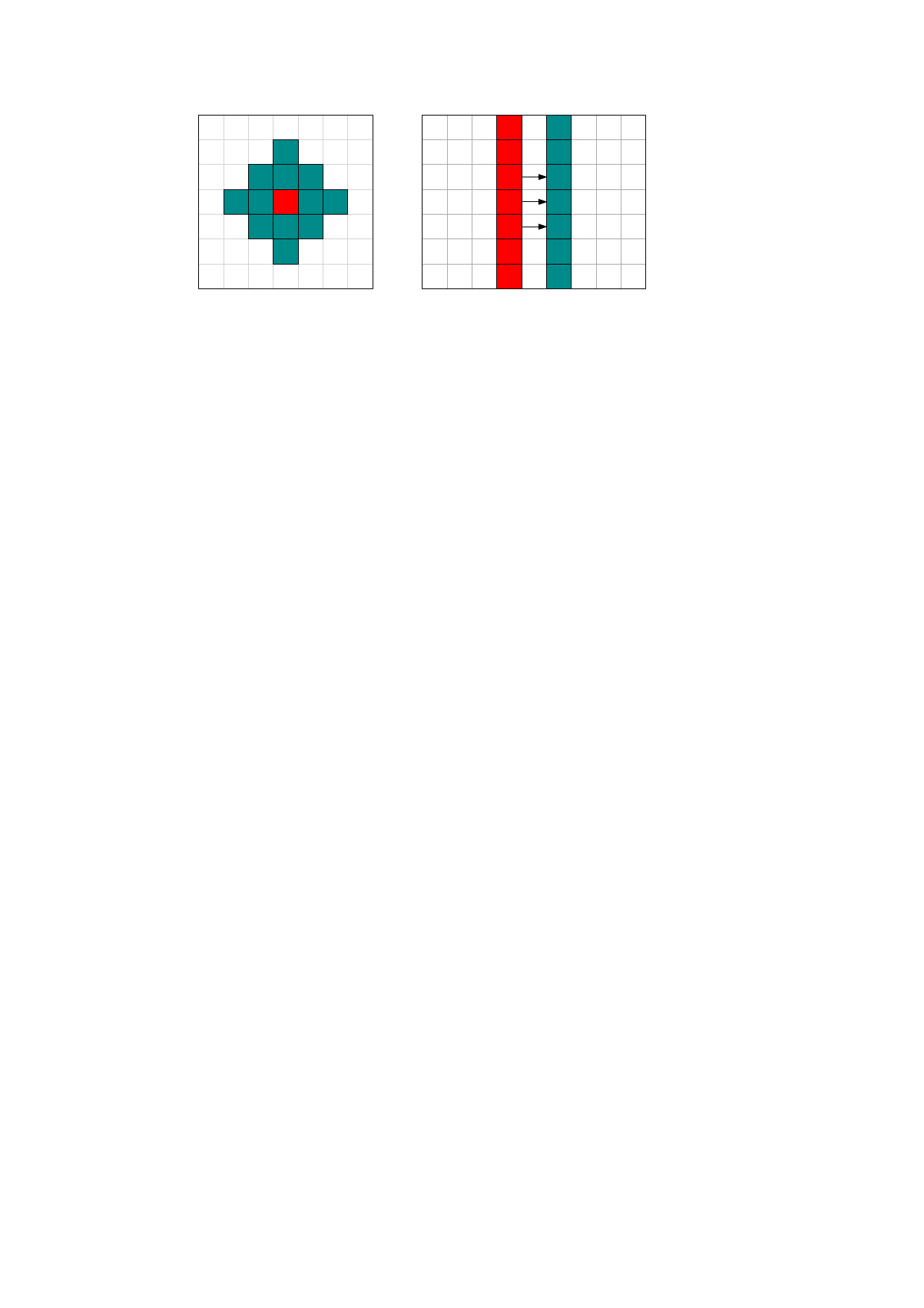}
		\caption{}
		\label{fig:target-neighorhood-bridges-a}
	\end{subfigure}\hfil
	\begin{subfigure}{.2\textwidth}
		\includegraphics[scale=0.5, page=3]{graphics/target-neighborhood-bridges}
		\caption{}
		\label{fig:target-neighorhood-bridges-b}
	\end{subfigure}\hfil
	\caption{(a) All positions that can be reached by moving the robot initially located on 
		the red square twice. (b) This figure shows a bridge. Note that all white squares have to be free
		in $C_s$ as well as in $C_t$.} 
	\label{fig:target-neighorhood-bridges}
\end{figure}

In our construction, we place gadgets in such a way that they are ``sufficiently far apart''.
The reason for this is that robots that belong to a certain gadget only can occupy target positions within the same gadget. 
In particular, we place our gadgets in such a way that a robot belonging to some gadget would need at least \num{6} steps to reach another gadget.
This allows us to make statements about the movement within gadgets, as well as their interaction with others. 
We refer to all positions that belong to a gadget as $\mathcal{G}$, and to the positions between gadgets as~$\mathcal{B}$.
In order for the gadgets to interact with each other, we place robots in $\mathcal{B}$ (the bridges) in such a way that their respective target positions are uniquely defined in~$\mathcal{B}$.
With this, it is easy to see that gadgets are indeed independent from each other.
We call a gadget \emph{solvable}, if there is a schedule that transforms the start configuration into the target configuration with a makespan of $2$, and each robot in~$\mathcal{G}$ is connected to a robot in~$\mathcal{B}$. 
Note that we do not necessarily require the intermediate configuration to be connected in order to say that a gadget is solvable; this will not be the case in the separation gadget.
Consider any configuration $C$ that is composed of gadgets constructed in
this manner, 
then a necessary condition for the existence
of a schedule for $C$ with a makespan of $2$ is that every gadget is~solvable. 

In the following, we introduce different gadgets that are used in the proof 
showing \NP-hardness of the problem. We show that these gadgets are solvable, and we indicate explicit~schedules.

\paragraph*{Line Gadget.} A \emph{line gadget} is a rectangular region of size $\ell \times 7$. 
It consists of $\ell - 4\geq 3$ horizontally adjacent robots that occupy the same positions in both, the start and the target configuration, as well as a variable number of bridges of length three, as depicted in~\cref{fig:line-gadget}.

\begin{figure}[htb]
	\centering
	\includegraphics[scale = 0.5]{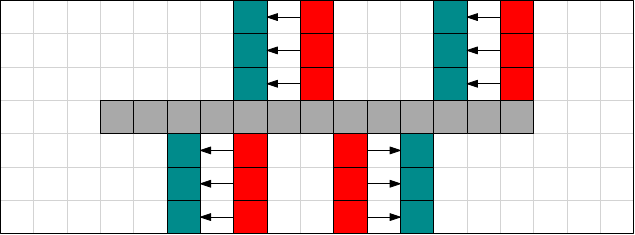}
	\caption{The gray positions indicate the line gadget. The colored positions indicate the start and target configurations of the respective bridges.}
	\label{fig:line-gadget}	
\end{figure}

In the \NP-hardness proof, line gadgets are used in two ways:
As \emph{clause gadgets}, they represent the logic of the satisfiability instance;
as \emph{auxiliary gadgets}, they ensure connectivity of the start and the
target configuration, respectively.

\begin{lemma}\label{lem:motion-plan-line}
	The line gadget is solvable.
\end{lemma}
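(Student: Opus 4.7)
The plan is to describe an explicit two-step schedule and verify that it meets the definition of solvability. The $\ell-4$ spine robots occupy identical positions in $C_s$ and $C_t$, so the natural choice is to keep every spine robot stationary at both integer times. This trivially realizes the correct final positions for all of $\mathcal{G}$ and reduces the problem to routing each bridge of length three from its start triple to its target triple inside $\mathcal{B}$ within two moves, while keeping each spine cell adjacent to a $\mathcal{B}$-robot after step one.

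First I would pin down the per-robot matching inside each bridge. By \cref{obs:target-neighborhood} applied symmetrically from $C_t$, each target cell of a bridge is reachable from exactly one start cell of that bridge, so the matching is forced and can be read off the gadget definition. Then I would exhibit an explicit trajectory: for each bridge, translate the three robots rigidly along the short corridor inside the $\ell\times 7$ strip that separates its $C_s$-triple from its $C_t$-triple, splitting the displacement into two unit steps. Rigid translation is automatically collision-free within a bridge, and different bridges occupy pairwise disjoint sub-rectangles of the strip—this is where the hypothesis $\ell-4\geq 3$ is used, since it leaves enough horizontal slack between consecutive bridge anchors to avoid overlapping corridors—so inter-bridge collisions are ruled out as well.

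The main obstacle is the extra condition that every robot in $\mathcal{G}$ is connected to a robot in $\mathcal{B}$ at the intermediate configuration; the intermediate configuration itself is not required to be connected, but the link from the spine to its bridges must survive step one. To handle this I would orient the two unit moves so that, after step one, the innermost of the three bridge robots still lies in the row directly adjacent to the spine cell to which the bridge is anchored, independently of whether the bridge sits above or below the line. A short case analysis over bridge orientation then shows that at time one every spine robot retains a $\mathcal{B}$-neighbor on its original bridge, and at time two the bridge is rigidly seated on the $C_t$ side and again adjacent to the spine. Combining the three ingredients—fixed spine, forced intra-bridge matching, and slack-based disjointness of bridge corridors—gives a stable two-step schedule witnessing that the line gadget is solvable.
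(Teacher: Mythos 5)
Your schedule is the same one the paper uses: the spine robots stay put and each bridge translates two unit steps horizontally to its target triple, with connectivity at the intermediate time following because no robot moves vertically, so each bridge's innermost robot remains in the row adjacent to the spine. One minor quibble: the condition $\ell-4\geq 3$ only guarantees that the spine contains at least three robots (the gadget reserves a two-cell margin on each side of the $\ell\times 7$ rectangle); the pairwise disjointness of the bridge corridors comes from the gadget's layout rather than from this inequality, so you should not lean on it for the inter-bridge collision argument.
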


\begin{proof}
	All robots realizing bridges move two steps to the right or to the left to their respective target positions. The line robots do not move at all. Because no robot moves vertically, all configurations are connected.
\end{proof}

\paragraph*{Separation Gadget.} A \emph{separation gadget} is a rectangular region of size $14\times 6$, containing the start and the target configuration as shown in~\cref{fig:hardness-separation-gadget-a}.

\begin{figure}[htb]
	\centering
	\begin{subfigure}{0.3\textwidth}
		\includegraphics[scale = 0.5, page=2]{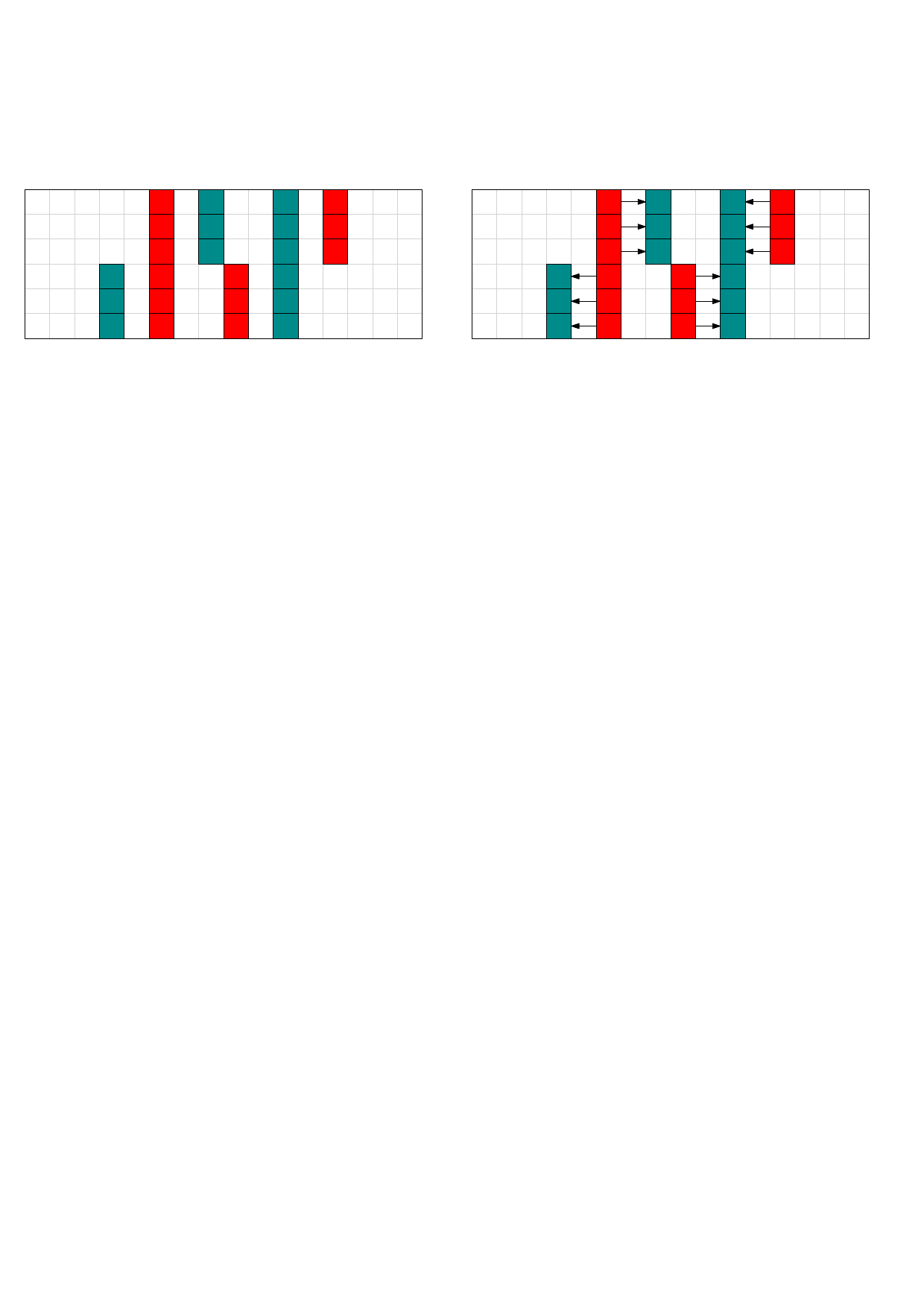}
		\caption{}
		\label{fig:hardness-separation-gadget-a}
	\end{subfigure}\hfil
	\begin{subfigure}{0.3\textwidth}
		\includegraphics[scale = 0.5, page=3]{graphics/hardness-separation-gadget}
		\caption{}
		\label{fig:hardness-separation-gadget-b}
	\end{subfigure}\hfil
	\caption{(a) shows the separation gadget. (b) visualizes the unique schedule with a makespan of~2.}
	\label{fig:hardness-separation-gadget}	
\end{figure}

\begin{lemma}\label{lem:motion-plan-separation}
	The separation gadget is solvable.
\end{lemma}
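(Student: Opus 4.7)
The plan is to exhibit explicitly the makespan-$2$ schedule depicted in \cref{fig:hardness-separation-gadget}~(right) and verify that it satisfies the three requirements of a solvable gadget: (i) after two parallel moves the start configuration has been transformed into the target configuration, (ii) both transformations are collision-free, and (iii) in the final configuration every robot of $\mathcal{G}$ is connected to some robot of $\mathcal{B}$. Because the separation gadget is a fixed $14\times 6$ pattern whose start, intermediate, and target configurations are entirely pinned down by the figure, the proof is essentially a direct verification rather than a construction.

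Concretely, I would first specify for each robot of the start configuration its matched target position, pairing robots with targets that lie within the diamond of radius $2$ described in \cref{obs:target-neighborhood}. This pairing is read off from the two arrows per robot shown in the right part of the figure. Then I would describe the intermediate configuration that arises after executing the first of the two moves simultaneously, and check that (a) no two robots occupy the same intermediate cell, (b) no pair of robots swaps cells in a single step, and (c) the same holds for the second step, so that the whole transformation is collision-free.

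The final check is the connectivity of $C_t$ restricted to the gadget, i.e., that every interior robot in $\mathcal{G}$ is adjacent (through a path in $C_t$) to a bridge robot of $\mathcal{B}$; this is merely an inspection of $C_t$ as drawn on the left side of \cref{fig:hardness-separation-gadget}. The key conceptual point—and the reason this lemma is stated separately from the other gadget lemmas—is that the definition of \emph{solvable} explicitly does \emph{not} require the intermediate configuration to be connected, and indeed the depicted intermediate configuration is disconnected. This is precisely the feature that gives the separation gadget its purpose in the global reduction: it forces the clause and auxiliary regions lying on either side to be cut apart in the intermediate step, so the variable-gadget robots are the only ones available to re-establish global connectivity there.

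I expect no genuine obstacle in this lemma on its own: the movement is purely horizontal/vertical over distance at most $2$, collisions are avoided by construction of the figure, and target-adjacency to $\mathcal{B}$ is visible from the picture. The only subtlety is making sure the reader understands why disconnection of the intermediate configuration is admissible here; I would emphasize this explicitly, since it is exactly what will be exploited later when combining the separation gadgets with the variable and clause gadgets in the global NP-hardness proof.
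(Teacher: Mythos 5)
Your proposal matches the paper's proof: both simply exhibit the makespan-$2$ schedule depicted in \cref{fig:hardness-separation-gadget} (right) and justify it by applying \cref{obs:target-neighborhood} to the target positions. The only difference is that the paper additionally records that this schedule is \emph{unique} (a fact exploited afterwards to argue that the separation gadget necessarily disconnects the intermediate configuration), whereas you establish only existence --- which is all the lemma as stated requires.
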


\begin{proof}
	There is a unique schedule with a makespan of 2, as shown in
	\cref{fig:hardness-separation-gadget-b}. This follows from
	applying \cref{obs:target-neighborhood} to the different target
	positions.  
\end{proof}
\clearpage

In the intermediate configuration of the unique schedule solving a separation
gadget, each pair of bridges is not connected within the gadget, i.e., the intermediate configuration is not connected. For that reason, these gadgets will be used to disconnect the connections between all clause gadgets. To guarantee a fully connected intermediate configuration, a specific movement within the variable gadget is necessary.

\paragraph*{Variable Gadget.} Let $n$ be the number of variables of the {\sc Planar Monotone 3Sat} instance. 
The \emph{variable gadget} is composed of the parts shown in~\cref{fig:variable-gadget}, as follows: 
There is one \emph{left end} followed by $n$ horizontally aligned \emph{variable arm segments} and one \emph{right end}. 
In order to capture the respective number of clauses in which a variable is contained, we can independently adjust the width of each arm segment.  
Bridges are placed in the gray colored hatched parts. 
We refer to the top and the bottom parts of the arm segments as the \emph{unnegated} and the \emph{negated arm segment}, as they model the unnegated and negated literals, respectively. 
As an example, consider~\cref{fig:hardness-variable-gadget-example}.

\begin{figure}[p]
	\centering
	\includegraphics[scale = 0.5]{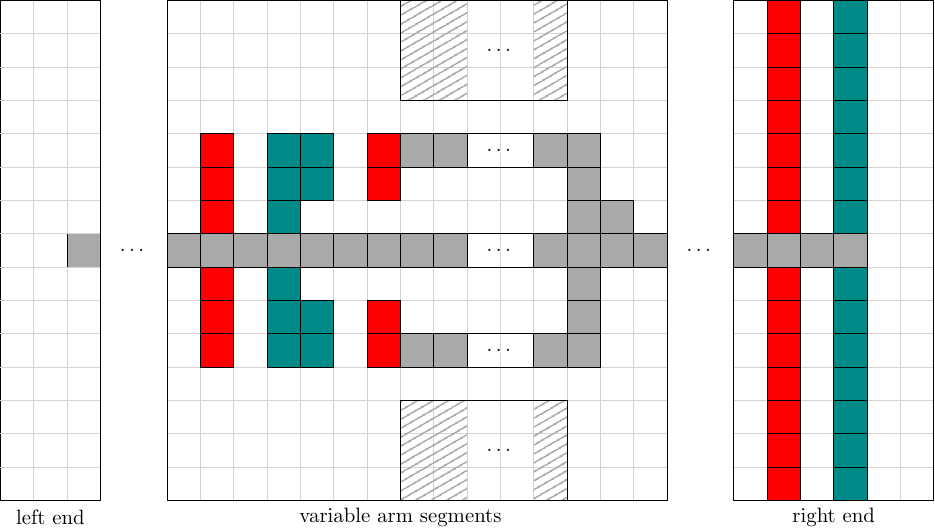}
	\caption{The variable gadget consists of a number $n$ variable arm segments, and exactly one left and right end each. Bridges to the respective clauses are placed in the hatched areas.}
	\label{fig:variable-gadget}	
\end{figure}

\begin{figure}[p]
	\centering
	\includegraphics[scale = 0.8]{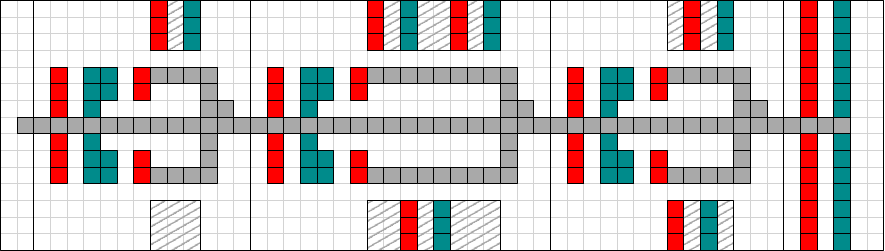}
	\caption{An example construction of the variable gadget with three variables. Note that each arm segment is of different width, due to the number of clauses in which the variable is contained.}
	\label{fig:hardness-variable-gadget-example}	
\end{figure}

\begin{figure}[p]
	\centering
	\includegraphics[scale = 0.8]{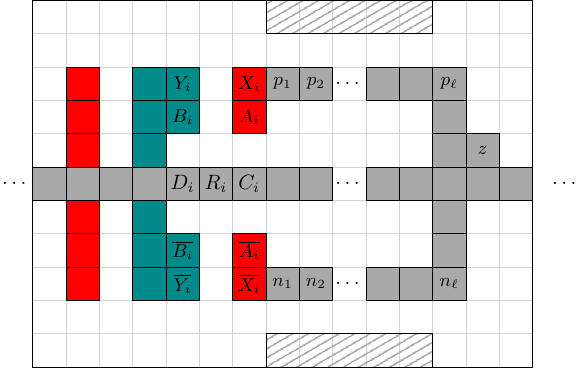}
	\caption{The figure shows the labeling of specific robots of a variable arm segment for the proof of \cref{claim:variable-one-literal}. The positions $p_i$ and $n_i$ denote the unnegated and negated arm segment, respectively.}
	\label{fig:variable-gadget-names}	
\end{figure}

\begin{lemma}\label{claim:variable-one-literal}
	In the variable gadget, it is not possible that both, the unnegated and the negated arm segment of a variable arm segment
	are simultaneously connected to its respective bridges.
\end{lemma}

\begin{proof}
	Without loss of generality, we assume that in the intermediate configuration
	the robots representing the unnegated arm segment are connected to the
	respective bridges. We~use the labels shown in
	\cref{fig:variable-gadget-names}. We argue that the following properties
	hold after the first~transformation.
	\smallskip 
	\begin{itemize}
		\item At least one robot of the unnegated arm segment ($p_1, \dots, p_\ell$) moved up.
		\item Robot $A_i$ moved left. 
		\item Robot $R_i$ moved up.
	\end{itemize}
	\smallskip
	Assume for the sake of contradiction that no robot of the unnegated arm segment
	moved up. Because the top two bridge robots moved horizontally (by applying
	\cref{obs:target-neighborhood}), every movement of the bridge's
	bottommost robot would not yield a connection to the arm~segment.
	
	The robot $A_i$ has two possible target positions, namely $B_i$ and
	$C_i$. Assume that $A_i$ moved down. Then, $X_i$ has the unique target position
	$Y_i$, so it moved left---thus, it is isolated in the intermediate
	configuration. Therefore, $p_1$ has also moved left. It follows by induction
	that every robot of the unnegated arm segment has moved left, which is a
	contradiction to the fact that at least one robot moved up. Thus, $A_i$
	moved left.
	
	Assume that $R_i$ did not move up. This is a contradiction to $A_i$ having moved left, as otherwise at least $A_i$ would be part of an isolated component in the intermediate configuration.
	
	Similar arguments hold for the case that the robots of the
	negated arm segment are connected to its respective bridges. Therefore, $R_i$
	has to move both up \emph{and} down in the first transformation. As this is
	impossible, this concludes the~proof.  
\end{proof}

\begin{lemma}\label{lem:motion-plan-variable}
	A variable gadget is solvable.
\end{lemma}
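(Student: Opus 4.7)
The plan is to exhibit, for every possible choice of truth assignment $\sigma\colon\{x_1,\dots,x_n\}\to\{\text{true},\text{false}\}$, an explicit stable schedule of makespan $2$ for the variable gadget. By \cref{claim:variable-one-literal} each arm segment can connect to at most one of its two bridge sets, and we will show conversely that every selection of one side per arm segment does admit a schedule; the lemma then follows from the existence of at least one such $\sigma$.

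First I would describe the movement inside a single variable arm segment assuming the corresponding variable is set to true (the false case is symmetric). Using the labels from \cref{fig:variable-gadget-names}, the robots $p_1,\dots,p_\ell$ of the unnegated arm shift upward into the positions where they meet the upper bridge robots; the pivot robots are routed exactly as forced in the proof of \cref{claim:variable-one-literal}, namely $A_i$ moves left to $C_i$, $R_i$ moves up, and $X_i$ follows into the slot vacated by $p_1$. The robots of the negated arm execute the mirrored ``idle'' motion that keeps them locally connected as a horizontal strip but never reaches the lower bridges. The left and right ends have, by \cref{obs:target-neighborhood}, unique target neighborhoods, so their local schedule is deterministic and simply has to be checked to chain with the chosen arm motions.

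Second, I would verify with \cref{obs:target-neighborhood} that each prescribed move stays within the two-step reachable region, that no two robots collide, and that every robot in $\mathcal{G}$ ends on its target position after two transformations. This reduces to a case analysis on the three pivot cells and the two endpoints of each $p_i$/$n_i$ strip.

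Third, and this will be the main obstacle, I would verify connectivity of the intermediate configuration. Start and target connectivity hold by construction, but after the first transformation the chosen arm in each segment must still form a single connected horizontal backbone linking the left end, all $n$ arm segments, and the right end, independently of the choices made in neighbouring segments. The key invariant is that the shifted strip $p_1,\dots,p_\ell$ together with the moved pivots $A_i,R_i,X_i$ induces a connected path, and that at the interface between two adjacent arm segments the pivot block is wide enough (three vertices) to absorb any combination of up/down choices on either side. Once this invariant is checked segment by segment—symmetrically for the false case—the schedule obtained by concatenating all local schedules is a valid stable schedule of makespan $2$, completing the proof.
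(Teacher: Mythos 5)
Your proposal follows essentially the same route as the paper: the paper's proof simply exhibits an explicit stable makespan-$2$ schedule (given pictorially in \cref{fig:variable-gadget-schedule}) that connects the unnegated arm segment to its bridges and observes that the negated case is symmetric. Your plan to spell out that schedule using the labels of \cref{fig:variable-gadget-names} and then verify reachability, collision-freeness, and connectivity of the intermediate configuration is the same argument in more detail; note only that the lemma itself requires just one such schedule to exist, while the realizability of \emph{every} assignment that you emphasize is only needed later, in \cref{claim:hardness-one}.
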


\begin{proof}
	There is a valid schedule with a makespan of 2 given in
	\cref{fig:variable-gadget-schedule}. The depicted schedule connects the
	unnegated arm segment with the respective bridges; a similar schedule exists for the negated case.  
\end{proof}

\begin{figure}[htb]
	\centering
	\begin{subfigure}[b]{0.25\textwidth}
		\includegraphics[page=2, scale=0.8]{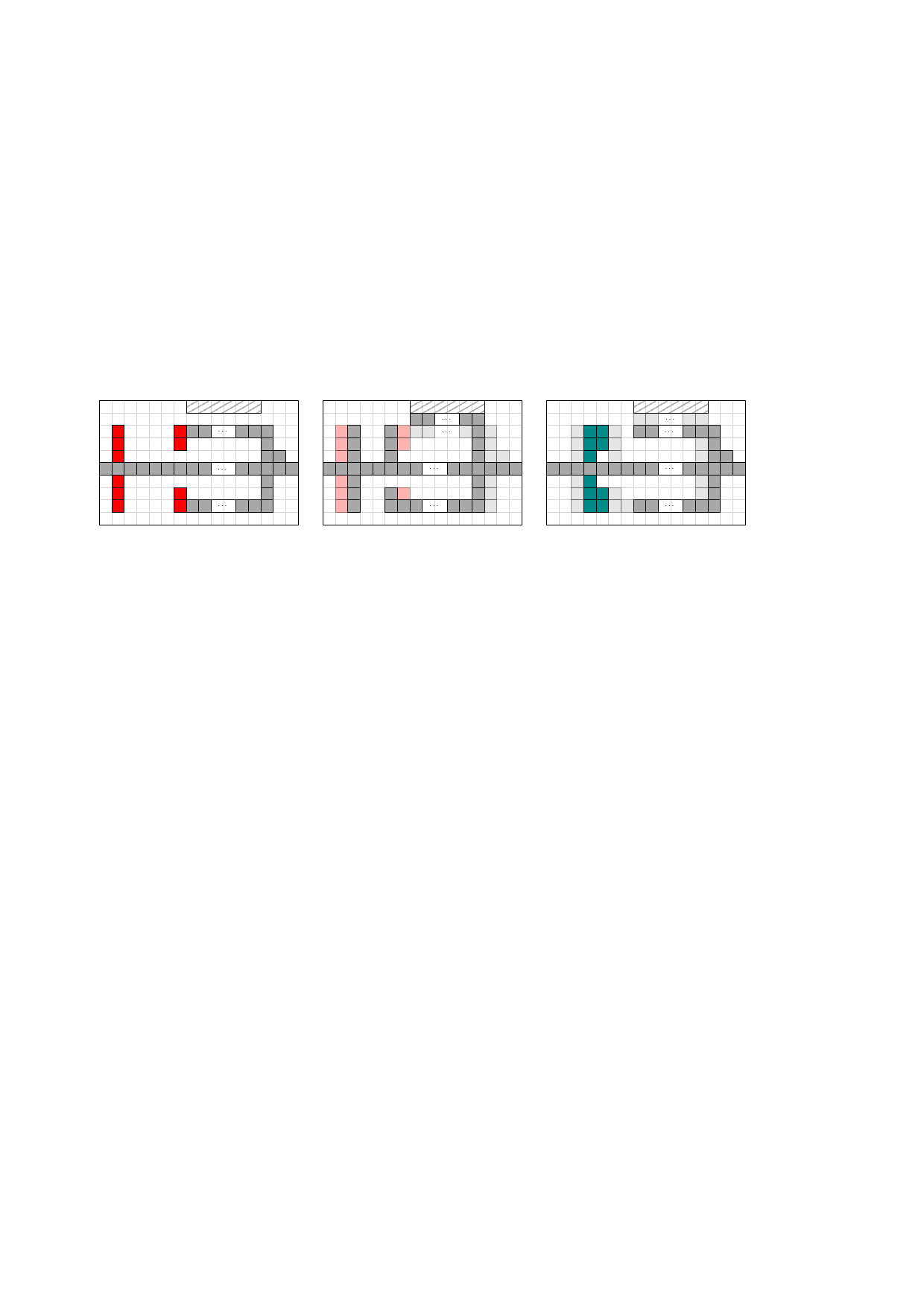}
		\caption{}
		\label{fig:variable-gadget-schedule-a}
	\end{subfigure}\hfil
	\begin{subfigure}[b]{0.25\textwidth}
		\includegraphics[page=3, scale=0.8]{graphics/hardness-variable-schedule}
		\caption{}
		\label{fig:variable-gadget-schedule-b}
	\end{subfigure}\hfil
		\begin{subfigure}[b]{0.25\textwidth}
		\includegraphics[page=4, scale=0.8]{graphics/hardness-variable-schedule}
		\caption{}
		\label{fig:variable-gadget-schedule-c}
	\end{subfigure}\hfil
	\caption{The figure shows a stable schedule with a makespan of 2 as indicated in \cref{lem:motion-plan-variable}. Positions with less opacity show the respectively occupied positions in the previous step. (a) shows that start configuration, while (b) indicates the single intermediate configuration, and (c) depicts the respective target configuration.}
	\label{fig:variable-gadget-schedule}	
\end{figure}

\subsection{Completing the NP-Hardness Reduction}

\complexitytheorem*

\begin{proof}
	Consider a rectilinear planar embedding of the
	variable-clause incidence graph $G_\varphi$ of a given {\sc Planar Monotone 3Sat} formula $\varphi$. The horizontally aligned
	variables in $G_\varphi$ are represented by a variable gadget---each variable is
	represented by an arm segment. For each clause, we introduce a line gadget with
	connecting bridges (and extend them if necessary) to the contained variables. 
	We add two or three bridges according to whether the clause contains two or three literals, respectively.	
	Between every nested pair of adjacent clauses, we introduce a separation gadget
	and connect 
	it to the clauses via bridges. As~a last step, we place two additional line gadgets
	at the top and bottom of the construction,  
	connect them via bridges to the
	variable gadget, and via separation gadgets to the respective topmost and
	bottommost line gadgets that represent clauses. For an example, see the bottom part of~\cref{fig:hardness-reduction-sketch}.
	
	\begin{claim}\label{claim:hardness-one}
		If the formula $\varphi$ has a satisfying assignment $\alpha$,
		then there is a valid intermediate configuration for $I_{\varphi}$ (i.e., the intermediate configuration is connected); therefore,
		there is a stable schedule transforming $C_s$ into $C_t$ (both given by $I_{\varphi}$) with a makespan of 2.  
	\end{claim}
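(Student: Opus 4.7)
The plan is to construct a stable schedule of makespan $2$ by applying, gadget by gadget, the local schedules established in \cref{lem:motion-plan-line,lem:motion-plan-separation,lem:motion-plan-variable}, using the satisfying assignment $\alpha$ to choose the direction in which each variable arm segment moves. Since $C_s$ and $C_t$ are connected by construction and each gadget is independently solvable in two steps, the only nontrivial task is to verify that the single intermediate configuration $C_2$ is connected.

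First, for every variable $x_i$ I would apply the variant of the schedule from \cref{lem:motion-plan-variable} that connects the unnegated arm segment to its bridges if $\alpha(x_i)=\text{true}$, and the variant connecting the negated arm segment otherwise. For every line gadget (clauses and auxiliary gadgets) I would use the schedule from \cref{lem:motion-plan-line}, and for every separation gadget the unique schedule from \cref{lem:motion-plan-separation}. Because the bridges between gadgets have uniquely defined target positions by \cref{obs:target-neighborhood}, and because no robot in the interior $\mathcal{G}$ of a gadget can leave it within two moves, these local schedules combine into one globally valid, collision-free schedule.

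To check that the intermediate configuration $C_2$ is connected, I would anchor everything to the variable gadget and argue outward. The variable gadget is internally connected in $C_2$ by the schedule of \cref{lem:motion-plan-variable}. The two auxiliary line gadgets at the top and bottom stay internally connected by \cref{lem:motion-plan-line} and remain attached to the variable gadget through their right-boundary bridges. For each clause $C$, the satisfying assignment provides at least one literal $\ell\in C$ with $\alpha(\ell)=\text{true}$; by our choice of variable schedule, the arm segment representing $\ell$ is exactly the one connecting to its bridges in $C_2$, so the line gadget of $C$ is attached to the variable gadget via that bridge. The separation gadgets deliberately break apart in $C_2$ (this is the whole point of their placement), but this is harmless, because every clause gadget and every auxiliary gadget is already suspended from the variable gadget through a satisfied bridge, and thus lies in the same connected component.

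The step I expect to require the most care is ruling out accidental disconnections at the bridge-to-gadget interfaces: one must verify that, on the arm segment chosen by $\alpha$, the attachment robots are really adjacent to the corresponding bridge robots after the first transformation, and not shifted away by the interior motion depicted in \cref{fig:variable-gadget-schedule}; an analogous check is required at the interface of each clause line gadget with its bridges. This reduces to a purely local inspection at each bridge site of the explicit schedules already exhibited in \cref{lem:motion-plan-line,lem:motion-plan-variable}. Once this is done, $C_s\rightrightarrows_{\chi} C_t$ is a stable schedule of makespan $2$, which proves the claim.
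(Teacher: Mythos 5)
Your proposal is correct and follows essentially the same route as the paper: apply the local schedules of \cref{lem:motion-plan-line,lem:motion-plan-separation,lem:motion-plan-variable}, let $\alpha$ dictate whether each arm segment connects to its unnegated or negated bridges, and argue that every clause and auxiliary gadget hangs off the variable gadget in the intermediate configuration. The only point you leave implicit that the paper states explicitly is that the two disconnected halves of each separation gadget are themselves each attached to an adjacent line gadget, so they too lie in the single connected component.
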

	
	\begin{claimproof}
		Let $\alpha$ be a satisfying assignment of $\varphi$. A valid
		intermediate configuration for $I_{\varphi}$ can be constructed as follows: All
		separation, line, and variable gadgets are transformed by the schedules given
		in \cref{lem:motion-plan-line,lem:motion-plan-separation,lem:motion-plan-variable}. For the assignment of variable $x_i$ the
		respective arm segment moves vertically. All bridge robots move to their
		respective target position due to \cref{obs:target-neighborhood}. In
		this configuration the helping line gadgets are directly connected to the
		variable gadget, and all clause gadgets are connected to the variable gadget as
		well. Because the disjoint parts of the separation gadgets are connected to a
		line gadget at one side, this configuration is connected.
	\end{claimproof}
	
	\begin{claim}\label{claim:hardness-two}
		If there is a connected intermediate configuration for $I_{\varphi}$ (such that there is a stable schedule transforming $C_s$ into $C_t$ (both given by $I_{\varphi}$) with a makespan of 2), then $\varphi$ is~satisfiable.
	\end{claim}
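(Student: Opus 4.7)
The plan is to extract a satisfying assignment $\alpha$ for $\varphi$ from any stable schedule $C_s \to C_m \to C_t$ of makespan $2$ for $I_\varphi$. For each variable $x_i$, \cref{claim:variable-one-literal} guarantees that in the intermediate configuration $C_m$ at most one of the unnegated or negated arm segments of $x_i$'s arm segment is connected to its respective clause bridges. I would therefore set $\alpha(x_i) = \text{true}$ if the unnegated arm of $x_i$ is connected to its clause bridges in $C_m$, and $\alpha(x_i) = \text{false}$ otherwise (the choice being irrelevant for variables whose arm connects to no clause bridge at all). By \cref{claim:variable-one-literal}, this definition is unambiguous.

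The core task is then to show that every clause $c$ of $\varphi$ is satisfied by $\alpha$. First I would invoke \cref{lem:motion-plan-separation}: each separation gadget admits a unique schedule, whose intermediate configuration splits into two components so that the pair of bridges on one side is isolated, inside the gadget, from the pair on the other side. Combined with the gadget-independence developed in \cref{sec:gadget-construction} (interior robots stay inside their own gadget within two moves, while bridge robots have uniquely determined target positions forced by \cref{obs:target-neighborhood}), this means that the only inter-gadget adjacencies realised in $C_m$ are via bridges, and that every separation gadget acts as a cut in $C_m$ between the two gadgets it nominally joins. Hence a clause gadget $L_c$ can reach the remainder of $C_m$ only via its own direct bridges to variable arm segments; any alternative route would leave $L_c$ through a separation gadget and therefore be broken.

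Since the schedule is stable, $C_m$ is connected, so $L_c$ must in fact connect to a variable arm through at least one such bridge. Because the planar rectilinear embedding of $G_\varphi$ places positive clauses above and negative clauses below the variable row, a realised bridge from a positive clause $c$ lands on the unnegated arm of some contained variable $x_i$, while a realised bridge from a negative clause lands on the negated arm. In the positive case, the unnegated arm of $x_i$ is connected in $C_m$, hence $\alpha(x_i) = \text{true}$, and $c$ is satisfied by its literal $x_i$; in the negative case, $\alpha(x_i) = \text{false}$, and $c$ is satisfied by $\overline{x_i}$. The auxiliary line gadgets at the top and bottom boundaries connect to the variable gadget by direct bridges that do not pass through separation gadgets, so they do not affect this argument.

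The hard part will be the cut step: I must rule out any unforeseen inter-gadget connection in $C_m$ that would, for instance, let $L_c$ reach the variable gadget through a neighbouring clause gadget while bypassing all of $L_c$'s own clause-variable bridges. I plan to exclude this by carefully applying \cref{obs:target-neighborhood} to the bridge robots (whose two-move target neighbourhoods pin them down to the positions intended by the construction, so no bridge can accidentally link to an unintended gadget) together with the surrounding-area discipline from \cref{sec:gadget-construction} that keeps the interior robots of each gadget confined to its own region. Once this confinement is in place, the connectivity argument above becomes essentially bookkeeping on the gadget adjacency structure, completing the proof of the claim.
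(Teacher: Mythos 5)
Your proposal is correct and follows essentially the same route as the paper's proof: both extract the assignment from which arm segment of each variable is connected to its bridges in the intermediate configuration (unambiguous by \cref{claim:variable-one-literal}), and both use the separation gadgets together with the gadget-independence from \cref{sec:gadget-construction} to force each clause gadget to be \emph{directly} connected to a variable arm, so that connectivity of the intermediate configuration yields a satisfied literal in every clause. The only difference is that you spell out the cut step in somewhat more detail than the paper, which asserts it as a consequence of the construction.
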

	
	\begin{claimproof}
		Because the intermediate configuration is connected, each
		clause gadget has to be connected to the variable gadget. This is only possible
		via the bridges at the boundary of the gadget.  
		Due to the construction of
		the separation gadgets, the clause gadgets have to be \emph{directly} connected
		to the variable gadget, i.e., a shortest path connecting a robot of each clause
		gadget with a robot of the variable gadget passes no other gadget of the
		construction. Thus, at least one of the three bridges of each clause is
		connected to a respective arm segment of the variable gadget. Because of
		\cref{claim:variable-one-literal}, for each variable either its unnegated
		or its negated arm segment can be connected to their respective bridges. Therefore, there is an
		assignment of $\{1,0,\bot\}$ for $\varphi$, given by the movement of each variable arm segment. We set a variable in $\varphi$ to $1$, if the respective arm segment is connected to the unnegated side, $0$ if it is connected to its negated side, and $\bot$ otherwise. 
		For each $\bot$ we arbitrarily choose $0$ or
		$1$. Due to the construction, this assignment satisfies $\varphi$.
	\end{claimproof}
	\cref{claim:hardness-one,claim:hardness-two} complete the proof of \cref{thm:connected-motion-planning-hard}.
\end{proof}

As a consequence of our construction in the proof of  \cref{thm:connected-motion-planning-hard}, even approximating the makespan is \NP-hard.

\begin{corollary}\label{cor:connected-motion-planning-optmial-hard}
	It is \NP-hard to compute for a pair of configurations $C_s$ and $C_t$, each with $n$ vertices,
	a stable schedule that transforms $C_s$ into $C_t$ 
	within a constant of $(\frac{3}{2}-\varepsilon)$ (for any $\varepsilon>0$) of the minimum makespan.
\end{corollary}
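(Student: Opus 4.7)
The plan is a standard gap-reduction from the exact-makespan-$2$ hardness result of \cref{thm:connected-motion-planning-hard}. The starting observation is that the makespan of any (stable) schedule is a positive integer by definition, so $\mathrm{OPT}(C_s,C_t)\in\mathbb{N}$ for every instance. In particular, if $\mathrm{OPT}\geq 3$, then $\mathrm{OPT}\geq 3$ as an integer, which is the only arithmetic fact the argument ultimately relies on.

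Next I would pre-process using \cref{lem:single-move-reconfiguration}: given any instance, we can decide in polynomial time whether a makespan-$1$ schedule exists. Thus for the hardness argument I may restrict attention to instances with $\mathrm{OPT}\geq 2$. Applied to the reduction of \cref{thm:connected-motion-planning-hard}, this guarantees that the produced instance $I_\varphi$ satisfies $\mathrm{OPT}(I_\varphi)=2$ when $\varphi$ is satisfiable, and $\mathrm{OPT}(I_\varphi)\geq 3$ otherwise (since any makespan-$2$ schedule would yield a satisfying assignment by \cref{claim:hardness-two}).

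Now suppose, for contradiction, that for some fixed $\varepsilon>0$ there is a polynomial-time algorithm $\mathcal{A}$ that returns a stable schedule of makespan at most $(\tfrac{3}{2}-\varepsilon)\cdot\mathrm{OPT}$. On a yes-instance $I_\varphi$, the output of $\mathcal{A}$ has makespan at most $(\tfrac{3}{2}-\varepsilon)\cdot 2 = 3-2\varepsilon < 3$, hence at most $2$ by integrality. On a no-instance, the output has makespan at least $\mathrm{OPT}(I_\varphi)\geq 3$. Consequently, simply comparing the makespan returned by $\mathcal{A}$ against $2$ distinguishes satisfiable from unsatisfiable formulas in polynomial time, contradicting \cref{thm:connected-motion-planning-hard}.

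There is no real technical obstacle here; the only point that needs care is that the reduction produces instances whose optimum is exactly $2$ in the yes-case (ruling out $\mathrm{OPT}=1$), which is why the polynomial-time makespan-$1$ test of \cref{lem:single-move-reconfiguration} is invoked first. With that taken care of, the integrality of the makespan does all of the work and yields the claimed inapproximability threshold of $\tfrac{3}{2}-\varepsilon$.
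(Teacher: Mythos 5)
Your argument is correct and is exactly the standard gap argument that the paper leaves implicit when it states the corollary as an immediate consequence of \cref{thm:connected-motion-planning-hard}: yes-instances of the reduction have optimum makespan at most $2$, no-instances at least $3$, and integrality of the makespan turns a $(\frac{3}{2}-\varepsilon)$-approximation into an exact decider. Your preprocessing via \cref{lem:single-move-reconfiguration} is harmless but unnecessary, since even if a yes-instance had optimum $1$ the approximation would still return a schedule of makespan at most $2$, which suffices to separate the two cases.
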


To see this, consider an instance that derived from a Boolean formula as constructed in our \NP-hardness reduction, and assume that the Boolean formula is satisfiable. 
Hence, the optimal makespan is \num{2}. 
Consider a potential approximation algorithm that has a guaranteed approximation factor better than $\frac{3}{2}$. 
As~the makespan is a natural number, and the minimum makespan of a false instance is \num{3}, we would be able to decide with that algorithm whether the Boolean formula is satisfiable, having as a consequence that \P=\NP.

\section{Bounded Stretch for Arbitrary Makespan}

Now we describe our algorithm for computing stable schedules with constant stretch, for configurations of constant scale. 
Again, a scale factor of $c$ corresponds to replacing each pixel of a polyomino shape by a quadratic $c\times c$ array of pixels; this will be defined in the preliminaries for the algorithm. 
In the remainder of this section, we describe the different phases of our approach that together show the following.

\begin{theorem}\label{thm:scaled-instances-bounded-stretch}
	There is a constant $c^*$ such that for any pair of overlapping start and target configurations with a scale of at least $c^*$, there is a stable schedule of constant stretch.
\end{theorem}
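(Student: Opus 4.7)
The plan is to build a stable schedule in two halves via a carefully chosen intermediate configuration $\widehat{C}$. Since $C_s$ and $C_t$ are both $c^*$-scaled and overlap, I would take $\widehat{C}$ to be a connected $c^*$-scaled superset of $C_s\cup C_t$, obtained by adding a constant-width collar of $c^*$-scaled blocks around $C_s\cup C_t$. Because the bottleneck matching has diameter $d$, each of $C_s$ and $C_t$ lies within Manhattan distance $d$ of the other, and so every point of $\widehat{C}$ is within Manhattan distance $\mathcal{O}(d)$ of both $C_s$ and $C_t$. Writing the whole schedule as $C_s \rightrightarrows \widehat{C} \rightrightarrows C_t$ and time-reversing the second half, it then suffices to exhibit, for any $c^*$-scaled configuration $C$ and a $c^*$-scaled connected superset $\widehat{C}$ of reach $\mathcal{O}(d)$, a stable schedule of makespan $\mathcal{O}(d)$ that transforms $C$ into $\widehat{C}$.

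For this subproblem I would use a pipeline-based routing scheme. Surplus robots are drawn from the interior of the $c^*$-scaled blocks of $C$ and delivered to the cells of $\widehat{C}\setminus C$ through a family of constant-width \emph{conveyor tubes} that lie inside $\widehat{C}$. Each tube, of length $\mathcal{O}(d)$, supports a local rotation primitive that, in one time step, shifts every robot in the tube one unit along the tube's direction, depositing a robot at the destination end and receiving a fresh one from the source block at the other end. Because $C$ is $c^*$-scaled, every source block has $\Theta((c^*)^2)$ robots, so a constant number may be released per time step while leaving a stationary backbone; because $\widehat{C}$ is $c^*$-scaled, a constant number of vertex-disjoint tubes can be packed through any cross-section of $\widehat{C}$. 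Arranging the tubes along a spanning walk of $\widehat{C}$ lets all of them operate in parallel, and after $\mathcal{O}(d)$ steps $\widehat{C}$ is fully populated.

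The main technical obstacle, which pins down the concrete value of $c^*$, is verifying that \emph{at every integer time} the instantaneous configuration is connected. This requires (i) a local connectivity lemma showing that a single rotating tube embedded in a $c^*$-scaled region preserves local connectivity via its stationary backbone, including at turns and junctions, and (ii) a global argument that the stationary skeleton of $\widehat{C}$ together with the union of tube footprints always induces a connected subgraph of the grid. Both arguments are local and combinatorial, and together they determine the smallest admissible $c^*$, driven by the maximum thickness needed to accommodate a turning tube plus its backbone. Once $c^*$ is fixed above this constant, each robot travels $\mathcal{O}(d)$ grid steps, all tubes operate simultaneously, and the total makespan is $\mathcal{O}(d)$, yielding constant stretch.
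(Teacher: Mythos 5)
There is a genuine gap, and it is fatal to the reduction as stated: robots are conserved. A transformation is a bijection between $n$ start and $n$ end positions, so every configuration in a schedule has exactly $n$ robots. Your intermediate configuration $\widehat{C}$ is a strict superset of $C_s\cup C_t$ (it contains all of $C_s$, all of $C_t\setminus C_s$, and a collar), hence $|\widehat{C}|>n=|C_s|$, and the schedule $C_s\rightrightarrows\widehat{C}$ cannot exist --- there are simply not enough robots to ``fully populate'' $\widehat{C}$, no matter how the conveyor tubes are routed. The same objection applies to the time-reversed half $\widehat{C}\rightrightarrows C_t$. This is precisely the difficulty the paper's construction is designed around: instead of filling a fat superset, it builds only a \emph{thin} connected scaffold, namely the boundaries of $cd\times cd$ tiles covering $C_s$ and $C_t$, which costs only $O(cd)$ robots per tile of area $(cd)^2$; a counting argument (roughly $\tfrac{c^2d}{4}\geq 100cd$, i.e.\ $c\geq 400$) shows the scale hypothesis guarantees enough interior robots in each $5\times 5$ tile neighborhood to pay for this. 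The bulk of the robots never leave their tile interiors except as dictated by an explicit accounting of where they must end up.

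Even if you reinterpret $\widehat{C}$ as such a skeleton, your proposal is missing the second key ingredient: a mechanism that decides \emph{how many} robots each region must contain at the end and a proof that the necessary inter-region transfers finish in $O(d)$ parallel time. The paper normalizes each tile's interior to a canonical triangle configuration (so that the two halves of the schedule meet at a genuinely identical intermediate configuration with the correct cardinality per tile), then models the imbalance between start and target filling levels as a supply-and-demand flow on the dual graph of the tiling, and proves this flow can be partitioned into at most $28$ subflows, each realizable by one parallel ``push clusters across a tile boundary'' round of makespan $O(d)$. Your spanning-walk/conveyor picture has no analogue of this: without the flow partition you cannot bound the number of sequential transfer rounds by a constant, and without a canonical per-region configuration the two halves of your schedule need not agree at the midpoint. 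To repair the proposal you would need to (i) replace $\widehat{C}$ by an $n$-robot tiled configuration, (ii) prove local availability of scaffold material from the scale bound, and (iii) add a global balancing argument --- at which point you have essentially reconstructed the paper's proof.
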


For clearer presentation, we do not focus on the specific value of the constant~$c^*$, but only argue its existence.

\subsection{Algorithm Overview and Preliminaries}

We provide a high-level overview of our algorithm, as depicted in~\cref{fig:overview}.
\begin{figure}[ht]
	\centering
	\includegraphics[width=0.9\textwidth]{}
	\caption{Overview of the computed schedule: (Top) Constructing an auxiliary structure, called \emph{scaffold}, in Phase~3 of our algorithm, (right) the refilling phase (Phase~4), and (bottom) deconstructing the scaffold in the final Phase~5.}
	\label{fig:overview}	
\end{figure}

Our algorithm works in five phase, where the first two phases are preparation phases; note that these two phases are not visualized in~\cref{fig:overview}.
In Phase~1, we ensure that the pair $(C_s, C_t)$ overlaps in at least one position. 
For this, we move $C_s$ towards $C_t$ along a bottleneck matching, such that the respective positions that realize the bottleneck distance, coincide. 
(This overlap is crucial for successfully constructing the auxiliary structure in Phase~3 of the approach, as we explain later.) 
Afterwards, in Phase~2, we use another bottleneck matching for mapping the start configuration $C_s$ to the target configuration $C_t$, minimizing the maximum distance $d$ between a start and a target location.
Furthermore, we establish the scale in both configurations, set $c$ to be the minimum of both scale values, and compute a suitable tiling whose tile size is $c\cdot d$, and that contain both $C_s$ and~$C_t$.

In Phase~3, we build a scaffolding structure around $C_s$ and $C_t$, based on the boundaries of \emph{$cd$-tiles} of the specific tiling, see~\cref{fig:overview,fig:scaffolding}. 
This structure enables simple connected reconfiguration procedures in the interior of tiles, as described in Phase~4 of our approach, yielding to a reconfiguration approach that consists of a connected configuration in each transformation step. 
Restricting robot motion to their current and adjacent tiles (due to the tile size) also ensures constant stretch. 
Note that, as the size of the tiles is related to $d$, the scaffolding structure is connected.

In Phase~4, we perform the actual reconfiguration of the arrangement. 
This consists of refilling the tiles of the scaffold structure, achieving the proper number of robots within each tile, based on elementary flow computations. 
As a subroutine, we transform the robots inside each tile into a canonical ``triangle'' configuration, see~\cref{fig:overview,fig:triangular-shape,fig:refilling_overview}. 

In the final Phase~5, we disassemble the scaffolding structure and move the involved robots to their proper destinations, see~\cref{fig:overview,fig:scaffolding}.

We proceed by providing the basic definitions, followed by summarizing the technical key components of our approach.

\subsubsection{Preliminaries for the Algorithm}
\label{sec:algprelim}

A configuration $C$ is \emph{$c$-scaled}, if it is the union of $c\times c$ pairwise disjoint squares of vertices. 
The \emph{scale} of a configuration~$C$ is the maximal $c$ such that $C$ is $c$-scaled; this notion reflects the idea of objects being composed of pixels at a certain resolution. 
(Note that this is a generalization of the uniform pixel scaling studied in previous literature, which considers a $c$-grid-based partition instead of an arbitrary union, so it supersedes that definition and leads to a more general set of results; for simplicity, we refrain from using ``generalized scale'' and stick to ``scale'' in the rest of the paper.)
As we will establish, sufficient scale ensures sufficient locally available building material for connected rearrangement.
Let $c, d \in \mathbb{N}$ be the scale and the diameter of the pair $(C_s, C_t)$, respectively. 
For $x,y \in \mathbb{N}$, a \emph{$cd$-tile $T$}, or
\emph{tile~$T$} for short, with \emph{anchor vertex} $(x\cdot cd, y \cdot cd)
\in V(G)$ is a set of $(cd)^2$ vertices from the grid $G$ with $x$-coordinates
from the range between $x\cdot cd$ and $x\cdot cd + cd - 1$ and $y$-coordinates
from the range between $y\cdot cd$ and $y\cdot cd + cd - 1$.
Note that, by definition, two tiles are pairwise disjoint, and all anchor points have the same coordinates $\bmod\ cd$. 
The \emph{boundary} of $T$ is the set of vertices from $T$ with an $x$-coordinate
equal to $x\cdot cd$ or equal to $x \cdot cd + cd - 1$, or with a $y$-coordinate
equal to $y\cdot cd$ or equal to $y \cdot cd + cd - 1$. The~\emph{interior} of
$T$ is $T$ without its boundary. The \emph{right}, \emph{top}, \emph{left}, and
\emph{bottom sides} of $T$ are the sets of vertices from the boundary of $T$
with maximum $x$-coordinates, maximum $y$-coordinates, minimum
$x$-coordinates, and minimum $y$-coordinates, respectively. The left and right sides of a
tile are \emph{vertical sides} and the top and bottom sides are
\emph{horizontal sides}. Two tiles $T_1,T_2$ are \emph{horizontal (vertical)
	neighbors} if they have two horizontal (vertical) sides $s_1 \subset T_1$ and
$s_2 \subset T_2$, such that each vertex from $s_1$ is adjacent in $G$ to a
vertex from $s_2$. Two tiles $T_1$ and $T_2$ are \emph{diagonal neighbors} if
there is another tile $T$, such that $T$ and $T_1$ are horizontal neighbors and
$T$ and $T_2$ are vertical neighbors. The \emph{neighborhood} of a tile $T$ is
the set of all neighbors of $T$. 

A \emph{start tile} is a tile containing a vertex from the start configuration. A \emph{target tile} is a tile containing a vertex from the target configuration. The \emph{$cd$-tiling} $\mathcal{T}$ of $(C_s,C_t)$ is the union of all start tiles including their neighborhoods and all target tiles. The \emph{scaffold} $\Sigma(\mathcal{T})$ is the union of all boundaries of tiles from $\mathcal{T}$. A~\emph{$cd$-tiled configuration}, or \emph{tiled configuration} for short, is a configuration that is a subset of the union of all tiles from $\mathcal{T}$ and a superset of~$\Sigma(\mathcal{T})$. The \emph{interior} of a tiled configuration $C$ is the set of all vertices from $C$ not lying on~$\Sigma(\mathcal{T})$. The \emph{filling level of a tile} $T \in \mathcal{T}$ is the number of robots in the interior of $T$. The \emph{filling level of a tiled configuration} $C$ is the mapping of each tile onto its filling level in $C$.

\subsubsection{Technical Key Components}

On a technical level, the five phases can be summarized as follows. 
\medskip
\begin{description}
	\item[(1) Guaranteeing Overlap.] Move the given configurations towards each other along a bottleneck matching to ensure that the (new) pair $(C_s,C_t)$ overlaps in at least one position.
	\item[(2) Preprocessing.] Apply the following three steps: (2.1)~Set~$c$ to be the minimum of $c^*$ (that is a roughly estimated constant lower bound on the scale for which our approach works) and the minimum scale values of $C_s$ and $C_t$. (2.2)~Compute the diameter $d$ of~$(C_s,C_t)$. (2.3)~Compute the tiling $\mathcal{T}$ of~$(C_s,C_t)$.
\end{description}

The algorithmic core of our algorithm consists of the following three phases.
\smallskip
\begin{description}
	\item[(3) Scaffold Construction.] Reconfigure the start configuration $C_s$ to a tiled configuration~$C_s'$ such that the interior of $C_s'$ is a subset of the start configuration~$C_s$, see \cref{fig:overview,fig:scaffolding}.
	\item[(4) Refilling Tiles.] Reconfigure $C_s' $ to a tiled configuration $C_t'$, such that the interior of $C_t'$ is a subset of the target configuration $C_t$, see \cref{fig:overview,fig:triangular-shape,fig:refilling_overview}.
	\item[(5) Scaffold Deconstruction.] Reconfigure $C_t'$ to $C_t$, see \cref{fig:overview,fig:scaffolding}.
\end{description}

Note that the scaffold deconstruction is inverse to the scaffold construction. 
In the following, we give the technical description of our algorithm, and the corresponding correctness analysis. In particular, we first assume that the start and target configurations overlap in at least one position, resulting in an algorithm guaranteeing constant stretch, and adapt this to the case in that an overlap initially does not exist, afterwards.

\subsection{Scaffold Construction}\label{sec:tiled-configurations}

\begin{lemma}
	For any configuration $C_s$ of scale $c$ there is a stable schedule of makespan~$\mathcal{O}(d)$, transforming $C_s$ into a tiled configuration $C'_s$, with the interior of $C'_s$ being a subset of $C_s$.
\end{lemma}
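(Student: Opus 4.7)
The plan is to transform $C_s$ into a tiled configuration $C'_s$ by routing excess material from the interior of start tiles onto the scaffold in a constant number of synchronous sweeps, each of makespan $\mathcal{O}(d)$, while preserving connectivity throughout. The key arithmetic is that each tile of $\mathcal{T}$ has boundary of size $\Theta(cd) = \Theta(d)$ (since $c$ is a constant), and that every non-start tile of $\mathcal{T}$ is, by definition, adjacent to a start tile.

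First I would establish local material availability. Because $C_s$ has scale $c \ge c^*$, every start tile of $\mathcal{T}$ contains at least one $c\times c$ block of $C_s$, and by choosing $c^*$ sufficiently large, each such block carries more robots than are required to cover its own share of the scaffold plus that of its (at most eight) tile-neighbors. For any non-start tile of $\mathcal{T}$, the required scaffold robots can therefore be drawn from an adjacent start tile at Manhattan distance $\mathcal{O}(cd) = \mathcal{O}(d)$. I would formalize this with a local flow computation that assigns each scaffold position to a source robot in $C_s$ at distance $\mathcal{O}(d)$, while designating a \emph{spine} of roughly $c \times c$ interior robots in each start tile that never moves. All unassigned interior robots of $C_s$ also stay put, which guarantees that the interior of $C'_s$ is a subset of $C_s$.

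Next I would execute the assignment in $\mathcal{O}(1)$ sweeps, one per direction relative to a tile. In each sweep, the robots destined for a particular side of a tile boundary are first marshalled into a dispatch strip along the corresponding face of their source tile, then pushed across the shared edge into the destination tile, and finally spread along the destination boundary. Each sub-step moves every participating robot by at most $\mathcal{O}(cd) = \mathcal{O}(d)$ cells, and since only a constant number of sweeps is performed, the total makespan remains $\mathcal{O}(d)$. After the final sweep, every tile of $\mathcal{T}$ carries its full boundary, so $C'_s$ is a tiled configuration.

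The main obstacle will be maintaining connectivity at every integer time. My plan is to grow the scaffold outward from the spines in layers: the first sweep completes scaffold segments adjacent to at least one spine, using that spine as an always-connected anchor; each subsequent sweep completes segments adjacent to scaffold already built in an earlier sweep. Because $c^*$ is a sufficiently large constant, each spine provides enough width for marching robots to route around it without collisions or disconnection, and each partial boundary segment is grown from an endpoint that already lies next to a spine or to previously built scaffold, so it stays attached to the global component at every step. This yields a stable schedule of makespan $\mathcal{O}(d)$ whose final configuration $C'_s$ is tiled with interior contained in $C_s$, as required.
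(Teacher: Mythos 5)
There is a genuine quantitative gap in your material-availability step, and it is the heart of the lemma. You argue that each start tile contains a $c\times c$ block of $C_s$ and that, for $c^*$ large enough, such a block ``carries more robots than are required to cover its own share of the scaffold plus that of its (at most eight) tile-neighbors.'' But a $c\times c$ block contains only $c^2 = \mathcal{O}(1)$ robots, whereas the boundary of a single $cd$-tile consists of $4cd-4 = \Theta(d)$ positions. No constant choice of $c^*$ makes $c^2 \geq \Theta(d)$ for all $d$, so a single block can never supply even one tile's boundary, let alone nine. A related problem: a start tile is any tile that merely \emph{intersects} $C_s$, so it may contain only a sliver of a $c\times c$ square (as few as one robot); you cannot assume each start tile locally owns enough material for itself, which also undermines the claim that every non-start tile can simply draw its $\Theta(d)$ scaffold robots from one adjacent start tile.

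The counting that actually works has to combine the scale with connectivity to get a \emph{thick} source of robots. The paper harvests from the $5\times 5$ tile neighborhood $N[T]$ of each tile $T$ and observes that a path in $C_s$ crossing this neighborhood has length $\Theta(cd)$ and, by $c$-scaledness, is covered by $c\times c$ squares of $C_s$; this yields $\Omega(c^2 d)$ robots available near $T$, against the $\mathcal{O}(cd)$ needed for its boundary (times the constant $25$ accounting for how many neighborhoods reuse the same tile as a source). The surplus factor is $c$, which is why a sufficiently large \emph{constant} scale ($c \geq 400$ in the paper) suffices. Your sweep-based execution and spine-anchored connectivity argument could plausibly be repaired on top of such a counting, and the interior-containment claim (unmoved robots stay in $C_s$) is fine, but as written the proposal's central resource bound is false, so the construction cannot be completed from the sources you designate.
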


\subparagraph*{Outline of the Construction.} 
For the construction, we consider $5 \cdot 5$ different classes, based on
$x$- and $y$-coordinates modulo $5cd$; see \cref{fig:scaffolding}. 
\begin{figure}[htb]
	\centering
	\includegraphics[width=\textwidth]{}
	\caption{Constructing the scaffold. Tiles with currently constructed boundaries are marked in purple. The zoom into the start configuration 
		$C_s$ shows the indirect neighborhood $N[T]$ of  
		a tile $T$ (middle) 
		for which its boundary is currently constructed. A further zoom~(right) into $T$ with an associated robot motion.
		In each step a robot from the interior of a tile $T' \in N[T]$ 
		is swapped with a free position on the boundary of $T$ based on a path $P$ on a BFS-tree.}
	\label{fig:scaffolding}
\end{figure}

We process a single class as follows. 
For each tile $T$ we consider its \emph{indirect} neighborhood~$N[T]$ consisting of all neighbors of $T$ and all neighbors of neighbors of $T$, i.e., a $5 \times 5$ arrangement of tiles centered at $T$. For constructing the boundary of $T$, we make use of robots from the interior of a tile in $N[T]$.

\subparagraph*{Constructing the Boundary of $T$ Works in Two Phases.} (3.1) Constructing the boundaries of all start tiles. (3.2) Constructing the boundaries of all neighbors of start tiles.

Note that it suffices to construct all boundaries of the start tiles and their neighboring tiles, because each target tile shares a side with a start tile or a side with a tile adjacent to a start tile. Furthermore, the scale condition is only necessary for the construction of the scaffold, i.e., we need to ensure that enough robots are available to build the scaffolding structure. Each additional step of the algorithm works independently of this condition.

For Phase~3.1, we subdivide all start tiles into $5 \cdot 5$
classes of tiles, distinguishing between their anchors' 
coordinates
modulo $5cd$; see the purple squares in~\cref{fig:scaffolding}. These classes are processed iteratively. In particular, we
construct the boundaries of all tiles belonging to a specific class in parallel
as follows. Without loss of generality, we assume that the union of $C_s$ with all boundaries of
start tiles is connected. Otherwise, $C_s$ lies completely inside a single tile
$T$, and we move all robots simultaneously down until the first robot lies
on the boundary of $T$.

For each tile $T$ of the current class, we consider a $5 \times
5$-neighborhood $N[T]$ of $T$ which consists of $T$, its neighbors, and all
neighbors of neighbors of $T$, see~\cref{fig:scaffolding}. The~scaffold~$\Sigma(N[T])$ is the union of all boundaries of tiles from~$N[T]$. The interior of~$N[T]$ is~$N[T]$ without its scaffold, i.e., $N[T]\setminus \Sigma(N[T])$. 
The \emph{priority} of a robot $r$ from
the interior of $N[T]$ is the shortest-path distance inside the current
configuration between $r$ and any vertex from~$\Sigma(N[T])$.

Let $C$ be the current configuration, $v$ be a vertex from the boundary of
$T$ that does not lie inside $C$ but is adjacent to a robot from $C$, and $r$ be a robot
from the interior of $N[T]$ with highest priority. If  
$C$ is connected and 
the robot~$r$ lies in the interior of $N[T]$, then there is a path $P$ inside
$C$ connecting $v$ and $r$, see the right of \cref{fig:scaffolding}. We
push all robots on $P$ along $P$ into the direction of~$v$ resulting in the
connected configuration $C \cup \{ v \} \setminus \{ r \}$. This one-step
motion is repeated until the interior of $N[T]$ is empty or the
boundary of $T$ is a subset of the current configuration.

\subparagraph*{Correctness of the Approach.} We need to argue that in Phase~3.1 the construction of the boundary of~$T$ stops because the boundary is completed and not because the interior of $N[T]$ is empty. 

Consider the vertices from $N(T) := N[T] \setminus T$ to be
organized in squares nested inside each other as follows. All vertices $v \in
N(T)$ adjacent to the boundary of $T$ belong to layer~$1$. Let $N_1(T)$ be the
vertices from $N(T)$ without the vertices from layer $1$. All vertices from~$N_i(T)$ that are adjacent to vertices from layer $i$ belong to Layer $i+1$.
Consider a path~$P \subset C_s$ connecting layer~$c$ with layer $cd-c$. As
$C_s$ is $c$-scaled, for each $v \in P$, there is a square $S$ of side length
$c$ such that $v \in S \subset C_s$. Hence, there is a tile $T' \in N[T]$, such
that all these squares cover at least $\frac{(cd - 2(c-2))(c-2)}{2} \geq
\frac{(c(d - 2))(c-2)}{2}$ vertices from $C_s$, which is lower-bounded by
$\frac{c^2d}{4}$ for $c \geq 4$. Constructing the boundary
of $T$ takes at most $4cd - 4$ robots to be taken from the interior of $N[T]$.
The tile $T'$ is used as part of the interior of at most $5 \cdot 5 = 25$
tiles' neighborhoods $N[T]$. Thus, at most $(4cd - 4)25 < 100 cd$ robots are
required from the interior of the tile $T' \in N[T]$. Hence, for
$\frac{c^2d}{4} \geq 100 cd$, i.e., $c \geq 400$  the construction of the
boundary of each start tile $T$ is completed in Phase~3.1. Because there is a constant number of tile classes (where each class can be processed in parallel), and the moving distances are bounded by a constant times $d$, the construction in finishes in~$\mathcal{O}(d)$. Phase~3.2 is realized analogously.

\subsection{Refilling Tiles}
\label{sec:refilling}

It remains to modify configurations within and between tiles. To this end, we first
establish how to efficiently perform reconfigurations between
any two tiled configurations \emph{with the same numbers of robots} {in the interior of respective tiles}; 
see~\cref{sec:reconfig-single-tiles}. {As a second step, we describe
	how to relocate robots between tiles such that efficient reconfigurations
	between any two tiled configurations \emph{with different numbers of robots} in
	the interior of respective tiles are achieved; see
	\cref{sec:realrefilling}.}

\subsubsection{Reconfiguration Maintaining the Number of Robots in Tiles}\label{sec:reconfig-single-tiles}

\begin{lemma}\label{lem:single-tile-reconfig}
	Let $C_s', C_t'$ be two tiled configurations such that for all tiles $T$, $C_s'$ and $C_t'$ have the same filling levels, 
		i.e., for any tile, the corresponding start and target configurations consist of the 
		same respective numbers of robots. Then there is a stable schedule 
	transforming $C_s'$ into $C_t'$ within a makespan of $\mathcal{O}(d)$.
\end{lemma}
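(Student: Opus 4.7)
The plan is to exploit the fact that the scaffold of $\mathcal{T}$ is contained in both $C_s'$ and $C_t'$ and stays fixed throughout; it already provides global connectivity, so it suffices to reconfigure the interior of each tile locally, \emph{in parallel} across all tiles, without ever disturbing the scaffold. Because interiors of distinct tiles are separated by scaffold cells, parallel moves in different tiles cannot collide, and the overall configuration remains connected as long as, inside each tile $T$, the interior robots stay connected to the boundary of $T$.

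My first step is to introduce a canonical monotone (staircase) shape $M_T(k)$ for each tile $T$ with filling level $k$, anchored in one corner of $T$ so that every robot of $M_T(k)$ is connected to the two adjacent scaffold sides of $T$ (this is the triangular shape suggested by \cref{fig:triangular-shape}). Because the filling level of $T$ agrees in $C_s'$ and $C_t'$, the same $M_T(k)$ can serve as an intermediate for both. I then reduce the lemma to showing: for every tile $T$, there is a stable schedule of makespan $\mathcal{O}(cd) = \mathcal{O}(d)$ that transforms the interior of $T$ in $C_s'$ into $M_T(k)$, while keeping its restriction to $T$ connected to the boundary of $T$ at every step. Reversing such a schedule for $C_t'$ and concatenating gives the desired $C_s' \rightrightarrows_\chi C_t'$, and running these local schedules simultaneously over all tiles preserves the overall makespan bound.

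For the local reconfiguration inside one tile, the idea is the same pushing routine used in the scaffold construction: iteratively pick a robot $r$ in the interior of $T$ of maximum distance (in the current sub-configuration of $T$) from the canonical corner, pick a still-empty target cell $v$ of $M_T(k)$ adjacent to the already-built portion, find a path $P$ of occupied interior cells of $T$ connecting $r$ to $v$, and shift all robots on $P$ one step toward $v$. Since $T$ is a $cd \times cd$ region, each such path has length $\mathcal{O}(cd) = \mathcal{O}(d)$, and $\mathcal{O}(cd)$ such pushes suffice. By grouping pushes into a constant number of conflict-free classes (as in \cref{sec:tiled-configurations}), the total makespan for one tile is $\mathcal{O}(d)$, and because all tile-local schedules run concurrently and use disjoint interior cells, the global makespan is still $\mathcal{O}(d)$.

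The main obstacle I anticipate is verifying the invariant that the interior of each tile remains connected to its boundary throughout the push sequence; this must hold both so that the global configuration stays connected via the scaffold and so that the next push path $P$ actually exists inside the currently occupied interior. I would handle this by always pushing along a path within the current BFS-tree rooted at the boundary of $T$ and restricted to occupied cells, processing robots in order of non-increasing distance from the canonical corner, so that each push either extends $M_T(k)$ by one cell adjacent to the scaffold or moves a leaf of the tree without creating cut vertices. Once this invariant is established, the makespan bound follows from the tile diameter, and the lemma follows by running the two halves ($C_s' \to M_T(k)$ and $M_T(k) \to C_t'$) in sequence, in parallel over all tiles.
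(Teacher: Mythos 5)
Your high-level skeleton matches the paper's: keep the scaffold fixed so tiles can be handled independently and in parallel, route each tile's interior through a canonical shape determined only by its filling level, and obtain $C_s' \rightrightarrows_\chi C_t'$ by concatenating $C_s' \rightrightarrows_\chi C_\Delta$ with the reversal of $C_t' \rightrightarrows_\chi C_\Delta$. The gap is in the local step. A tile's interior contains up to $(cd-2)^2 = \Theta(d^2)$ robots, so building the canonical staircase one cell per push requires $\Theta(d^2)$ pushes; since each push (shifting the robots of a single path one step) is one parallel transformation, your routine has makespan $\Theta(d^2)$, not $\mathcal{O}(d)$. Your claim that ``$\mathcal{O}(cd)$ such pushes suffice'' is where this goes wrong: that count is correct for the scaffold construction, where only the $\mathcal{O}(cd)$ boundary cells of a tile must be filled, but not for the interior. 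The ``constant number of conflict-free classes'' from \cref{sec:tiled-configurations} does not rescue this either: those classes partition the \emph{tiles} of the plane (mod $5cd$) so that their neighborhoods are disjoint; they do not parallelize the $\Theta(d^2)$ pushes \emph{within} one tile. To reach makespan $\mathcal{O}(d)$ you must move $\Theta(d)$ robots per time step along vertex-disjoint paths inside a single tile while preserving connectivity, and exhibiting that structure is precisely the nontrivial content your argument omits.

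The paper supplies exactly this missing parallelism in two phases. First it compacts the interior into a \emph{monotone} configuration by simultaneous gravity-style down and left moves of all robots (with a careful initial pass for robots attached only to the right side), each sweep taking $\mathcal{O}(cd)$ steps and provably preserving connectivity. Then, exploiting monotonicity, it defines levels $x+y$, identifies the occupied ``mountains'' on the highest partially filled level and the empty ``valleys'' on the lowest partially empty level, and in a \emph{single} parallel transformation pushes a whole batch of robots from mountains to valleys along non-crossing L-shaped paths; each such step decreases the gap $U-L$ by at least one, so $\mathcal{O}(cd)$ steps reach the canonical triangle. You would need to either adopt this level-by-level batching (and prove the non-crossing and connectivity invariants for it) or find another way to guarantee $\Theta(d)$ disjoint simultaneous push paths per step; as written, the proposal does not establish the claimed makespan.
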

In the following we describe reconfigurations that leave all robot movements within the interior of their respective tiles
	$T$; thus, all tiles can be reconfigured in parallel. Therefore, we only have
to describe the approach for a start configuration $C_s$ and a target
configuration~$C_t$ \emph{within the interior of a single tile $T$} of a tiled configuration $C_s'$. 

The idea is as follows: As the moves are reversible, it is sufficient to prove reachability of a canonical intermediate configuration in $\mathcal{O}(d)$ transformations, which allows us to simplify our argument.
In the following, we define the properties of our canonical structure $C_\Delta$ and provide the necessary means to reach it from an arbitrary initial state.

\subparagraph*{Outline of the Reconfiguration.} First, we compute two stable schedules $C_s \rightrightarrows_{\chi} C_s^m$
and $C_t \rightrightarrows_{\chi} C_t^m$, where $C_s^m$ and $C_t^m$ are \emph{monotone
configurations}, where we call a configuration to be monotone, if for every robot $r$ in the interior of $T$ all positions to the left and to the bottom are occupied. These reconfigurations are achieved by a sequence of down and
left movements, maintaining connectivity after each move (see
\cref{fig:triangular-shape}, Phase~4.1). Proceeding from these monotone
configurations, the robots are arranged into a triangular configuration~$C_{\Delta}$ that occupies the lower left positions 
(defined by a diagonal line with a slope of $-1$) of the interior of $T$. 
This is achieved by swapping pairs of occupied and empty positions within a carefully defined area in several one-step moves along
L-shaped paths (see~\cref{fig:triangular-shape}, Phase~4.2). The~property of $C_{\Delta}$ is that it is the
same for all initial configurations with equally many robots. Thus, to get the stable schedule $C_s \rightrightarrows_{\chi} C_{\Delta} \rightrightarrows_{\chi} C_t$, we can simply revert $C_t \rightrightarrows_{\chi} C_{\Delta}$ and combine the result with $C_s \rightrightarrows_{\chi} C_{\Delta}$.

\begin{figure}[ht]
	\centering
	\includegraphics[page = 1, width=\textwidth]{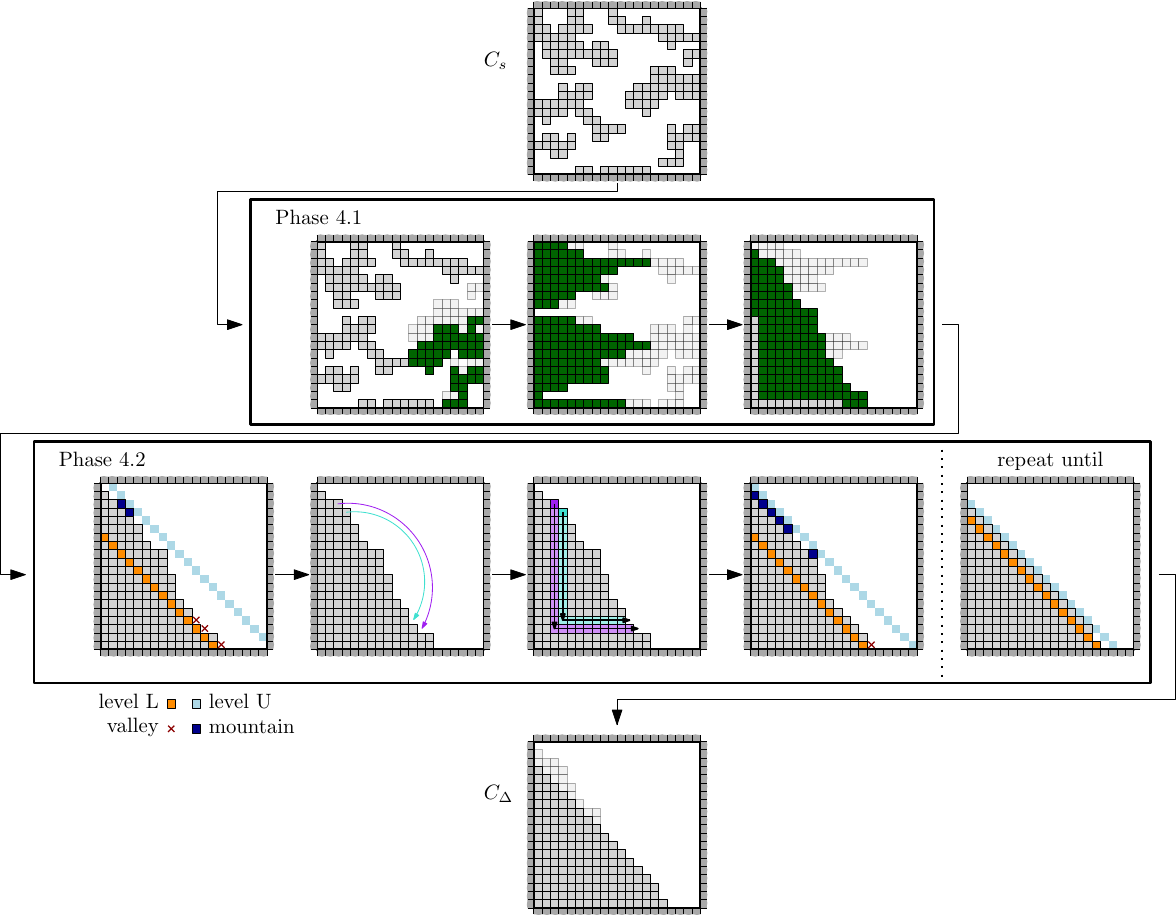}
	\caption{Reconfiguring the configuration $C_s$ (top) into the canonical triangle configuration~$C_\Delta$ (bottom).
		Phase~4.1 achieves a monotonic arrangement; light gray squares indicate previous positions
		of moved robots (shown in green). Phase~4.2 transforms the monotonic configuration into~$C_\Delta$.}
	\label{fig:triangular-shape}
\end{figure}

\medskip
Technically, the approach consists of the following four phases, see \cref{fig:triangular-shape}.
\medskip
\begin{description}
	\item[(4.1) Monotone Start Configuration.] Reconfigure $C_s$ into $C_s^m$.
	\item[(4.2) Canonical Triangle.] Reconfigure $C_s^m$ into $C_{\Delta}$.
	\item[(4.3) Monotone Target Configuration.] Reconfigure $C_{\Delta}$ into $C_t^m$.
	\item[(4.4) Target Configuration.] Reconfigure $C_t^m$ into $C_t$.
\end{description}

Phase~4.4 corresponds to a reversal of Phase~4.1, and Phase~4.3 to one of Phase~4.2, so we only have to describe the first two phases. 
We analyze them individually, leading to a proof of~\cref{lem:single-tile-reconfig}. 
Note that we exclude the corners of a tile, so the robots on the tile's side now form four \emph{non-adjacent} sides. Furthermore, only robots in a tile's interior move.

\subparagraph*{Constructing the Monotone Start Configuration (Phase~4.1).}
In the first step, we only consider robots for which the right side of their tile $T$ is the only one to which they are connected through the interior of $T$.
We iteratively move these robots down until further movement is blocked, i.e., any further down move is not collision-free. 
In the second and third steps, we move all robots left, followed by moving all robots down, each time until further movement is blocked. 

\begin{claim}
	Phase~4.1 turns a configuration in a tiles' interior into a monotone configuration.
\end{claim}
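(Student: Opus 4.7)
The plan is to analyze the three sub-steps of Phase~(4.1) in order and check that the configuration produced after Step~(3) satisfies the monotonicity condition: for every occupied interior cell~$(j,k)$, all cells in the same row to its left and all cells in the same column below it are also occupied.

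Step~(1) only relocates a subset of interior robots and changes neither the total number of interior robots nor the set of interior cells they may occupy; its purpose is to preserve connectivity during the later steps, not to shape the outcome. So I may treat the configuration $C_1$ handed to Step~(2) as an arbitrary configuration in the interior of~$T$. For Step~(2), I would argue that iteratively sliding every interior robot leftward until blocked terminates with every row packed: since all robots move in a common horizontal direction, none can overtake or displace another, and the iteration ends precisely when each robot either sits in the leftmost interior column or has an occupied cell immediately to its left. Using that the scaffold cells just outside the interior are fixed and act as an immovable left wall, a short induction on the number of parallel time steps gives that after Step~(2) each row~$i$ occupies exactly the $r_i$ leftmost interior columns, where $r_i$ is the number of robots Step~(2) received in that row. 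An analogous argument shows that Step~(3) packs each column to the bottom; since $x$-coordinates are invariant under vertical moves, the number of robots in column~$j$ at the end equals $c_j := |\{i : r_i > j\}|$, and column~$j$ occupies the $c_j$ bottommost interior rows.

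Finally, the map $j \mapsto |\{i : r_i > j\}|$ is a weakly decreasing step function, so $(c_j)$ is weakly decreasing in~$j$. Monotonicity then follows immediately: if $(j,k)$ is occupied, then $k < c_j$, so all $(j,k')$ with $k' < k$ are occupied by the column-packing, and for any $j' < j$ we have $c_{j'} \ge c_j > k$, hence $(j',k)$ is occupied as well.

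The main subtlety is verifying rigorously that the parallel, collision-free sliding of robots in a common direction achieves the claimed packing against the interior boundary; this reduces to a routine induction, but it relies on the fact that the scaffold is already in place around the interior when Phase~(4.1) is executed, which is guaranteed by the preceding scaffold-construction phase. Step~(1)'s specific behavior is otherwise irrelevant to the monotonicity claim and is only needed to keep the intermediate configurations connected.
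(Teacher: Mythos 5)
Your analysis of the final shape is correct, and it is actually more careful than what the paper offers for that part: the paper dismisses the geometric outcome as straightforward, whereas you derive explicitly that left-packing rows and then down-packing columns produces column heights $c_j = |\{i : r_i > j\}|$ that are weakly decreasing, which is exactly monotonicity. The problem is that this is the part of the claim that carries almost no weight in the paper, and you omit the part that does: you never show that Phase (4.1) is a \emph{stable} schedule, i.e., that every intermediate configuration stays connected and that the parallel ``move until blocked'' sweeps are realizable as collision-free transformations. The paper's proof of this claim consists almost entirely of that argument (it explicitly states that monotonicity is straightforward and that it will ``only argue that it yields a stable schedule''), and stability is what \cref{lem:single-tile-reconfig} requires of the resulting schedule. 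You acknowledge connectivity as the purpose of Step (1) and then declare it out of scope, but Step (1) by itself does not give stability; one still has to prove that each individual down (or left) sweep cannot disconnect the configuration.

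The missing argument is short but is a genuine idea, not bookkeeping: suppose a down-move of a robot $r_j$ disconnects it from a robot $r_{j+1}$; then the two must have been horizontally adjacent before the move, with $r_j$ moving down while $r_{j+1}$ stayed put; but $r_{j+1}$ stays put only because the position directly below it is occupied by a robot that is itself blocked (a chain reaching the bottom side or an immobile robot), so $r_j$'s new position is adjacent to that occupied position and the two robots remain connected after all --- a contradiction. A symmetric argument handles the left sweep, and the same blocking analysis shows that each sweep is well defined as a set of collision-free moves. Without this, your proof establishes the geometry of the terminal configuration but not that the schedule producing it is admissible, which is the point of the claim in its context.
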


\begin{claimproof}
	It is straightforward to see that for, any configuration, Phase~4.1 results in a monotone configuration, so we only argue that it yields a stable schedule.
	
	Without loss of generality, we only consider the down movement of a
	robot. 
	If the movement of a robot $r$ resulted in a collision, there would be another robot on its target position that cannot move either. 
	Because the robots only move down, it iteratively follows that this movement is blocked because of a robot that is either adjacent to the bottom side or for which the right side is not the only one to which it is connected.
	
	We now argue that a down move of a robot $r$ preserves connectivity to
	other robots. 
	Assume that there is a robot $r'$ to which $r$ was connected before the move (say, in $C_i$), but not after it (in $C_{i+1}$). 
	Consider a path from $r$ to $r'$ in $C_i$, the last robot $r_j$ on that path which is not connected to $r$ in $C_{i+1}$, and its adjacent robot $r_{j+1}$ on that path that is still connected to $r'$; note that this robot could be $r'$ itself. 
	Because $r_j$ and $r_{j+1}$ are not connected in $C_{i+1}$, they were horizontally adjacent in $C_i$. 
	Without loss of generality, this disconnection happened because $r_{j}$ moved down, whereas $r_{j+1}$ did not.
	But then all positions below $r_{j+1}$ had to be occupied. Thus, $r_j$ and $r_{j+1}$ are still connected, a contradiction. 
\end{claimproof}

By simply changing the direction of the robot's movements from down to left, all arguments hold true for the entire phase.

For Phase~4.2, we use the following terminology. 
The \emph{level} of a position in the tile's interior is the sum of its coordinates. 
A level is \emph{filled}, if all of its positions are occupied by a robot, and \emph{empty}, if none is occupied by a robot. 
The highest filled level is denoted by $L$, the lowest empty level by $U$. 
Let $\mountain$ be the set of all positions on level $U-1$ occupied by a robot, and $\valley$ be the set of all positions on level $L+1$ that are not occupied by a robot; we call the positions of $\mountain$ and $\valley$ \emph{mountains} and \emph{valleys}, respectively. 

\subparagraph*{Constructing the Canonical Triangle (Phase~4.2).} Choose two equally sized subsets $\mountain'\subseteq \mountain$ and $\valley'\subseteq \valley$ and push each robot from $\mountain'$ to a different position in $\valley'$ along an L-shaped path; this can be done simultaneously in one parallel move for all paths. To determine the paths, simply match 
mountains and valleys, iteratively, in a way that no pair of paths cross each other. This results in reducing $U$, and raising $L$, i.e., the two levels move towards each other. We distinguish two cases.
\medskip
\begin{itemize}
	\item $U-L > 2$: If $|\mountain|\geq |\valley|$, choose an arbitrary subset $\mountain'\subset \mountain$ with $|\mountain'|= |\valley|$. Otherwise, choose an arbitrary subset $\valley'\subset \valley$ with $|\valley'|= |\mountain|$.
	\item $U-L = 2$: Note that mountains and valleys are on the same level, and $|\valley| \geq |\mountain|$ hold. Choose $\valley' \subset \valley$ to be the subset of size $|\mountain|$ with \emph{smallest} $x$-coordinates and set $\mountain' = \mountain$.
\end{itemize}

\begin{claim}
\label{claim:monotone}
	Phase~4.2 yields a stable schedule for reconfiguring all monotone configurations with the same number of robots into the same triangular configuration.
\end{claim}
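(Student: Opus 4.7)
My plan is to prove the claim by establishing four properties of Phase~(4.2): collision-freeness of each parallel push, preservation of the monotone staircase invariant, strict decrease of the potential $U - L$ at every step, and uniqueness of the terminal configuration as a function of the robot count $n$. Together these imply that Phase~(4.2) is a stable schedule reconfiguring any monotone input of size $n$ into the same canonical triangle $C_{\Delta}$.

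For collision-freeness, I first fix a non-crossing matching between $\mathcal{M}'$ and $\mathcal{V}'$ by pairing the outermost mountain with the outermost valley and nesting inward, so that no two L-paths share an interior position. Using monotonicity, each L-path from $m_i$ to $v_i$ can be routed so that every interior position lies either on a fully filled level $\le L$ or on a column/row emanating from $m_i$ that is filled by monotonicity, while the valley endpoint is the unique empty position; pushing all L-paths simultaneously by one unit toward their valleys is therefore a well-defined parallel move. Monotonicity is preserved by a local check at the only two modified positions: the upper and right neighbors of the vacated $m_i$ sit on level $U$ (empty before and after), while the lower and left neighbors of the newly filled $v_i$ sit on levels $\le L$ (fully filled before and after). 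Since the post-push configuration remains a monotone staircase, it remains connected.

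For progress, I use the case distinction of Phase~(4.2). When $|\mathcal{M}| \ge |\mathcal{V}|$ the choice $\mathcal{V}' = \mathcal{V}$ fills every position on level $L + 1$, turning it into a full level and raising $L$ by at least one; when $|\mathcal{V}| > |\mathcal{M}|$ the choice $\mathcal{M}' = \mathcal{M}$ empties every mountain on level $U - 1$, turning it into an empty level and lowering $U$ by at least one. Either way $U - L$ strictly decreases, so the process halts after $\mathcal{O}(d)$ parallel steps since $U - L = \mathcal{O}(d)$ inside a tile of side~$cd$. At termination the configuration consists of fully filled levels $0, 1, \ldots, L^\ast$ together with, possibly, one partial level $L^\ast + 1$ whose occupied positions are exactly those with smallest $x$-coordinates (forced by the smallest-$x$ tiebreaker rule in the $U - L = 2$ case); this shape depends only on $n$, establishing that all monotone inputs with $n$ robots reach the same $C_{\Delta}$.

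I expect the main obstacle to be the careful verification that the L-paths can be chosen simultaneously non-crossing and fully occupied, particularly in the $U - L = 2$ boundary case where mountains and valleys share a single level and ill-chosen matchings can cause two L-paths to conflict at a shared corner. Justifying that a nested outer-to-outer matching always resolves these conflicts, and that the smallest-$x$-coordinate tiebreaker indeed produces the unique left-packed canonical triangle rather than a left-right asymmetric variant, will require the most bookkeeping in a formal write-up.
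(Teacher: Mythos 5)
Your proof follows essentially the same route as the paper: an order-preserving, non-crossing matching of mountains to valleys realized by L-shaped pushes, the potential $U-L$ strictly decreasing each iteration, and termination in the left-packed canonical triangle determined only by $n$. Where you genuinely add something is the invariant that every intermediate configuration is again monotone: the paper's proof only argues that the paths are pairwise non-crossing (which gives collision-freeness) and asserts stability from that, whereas your monotonicity invariant simultaneously justifies that every interior position of each L-path is occupied (so the parallel push is a well-defined chain move whose net effect is exactly ``vacate the mountain, fill the valley'') and yields connectivity of each integer-time configuration for free, since a nonempty monotone interior is connected and attached to the tile boundary. This makes your write-up somewhat more complete than the paper's on the stability side. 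One caution: make sure ``pairing the outermost mountain with the outermost valley and nesting inward'' means the order-preserving matching (leftmost mountain to leftmost valley, and so on), as in the paper's ``adjacent mountains and valleys''; a combinatorially \emph{nested} matching (leftmost to rightmost) can produce L-paths that share a position, so the phrasing should be disambiguated. Finally, note that both you and the paper treat the $U-L=2$ step as left-packing the single partial level; that is the intended reading, and your argument is at the same level of rigor as the paper's there.
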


\begin{claimproof}
	We show the following properties.
	\smallskip
	\begin{enumerate}
		\item The pushing paths do not cross. 
		\item After each iteration, $U-L$ decreases by at least 1. 
		\item The approach terminates in a configuration that only depends on the number of robots.
	\end{enumerate}
	
	For the non-crossing property of the paths, we show that the mountains and
	valleys have pairwise disjoint $x$- and $y$-coordinates. If $U-L > 2$, assume for the sake of
	contradiction that a mountain and a valley have equal $x$-coordinate. These
	positions have to be adjacent, which would result in $U-L = 1$, a
	contradiction. If $U-L = 2$ holds,
	mountains and valleys are already on the same level, so they have distinct
	$x$-coordinates. Similar arguments hold for the $y$-coordinates. It
	directly follows from the definition of levels  that for all positions $(x_1,
	y_1), (x_2, y_2) \in \mountain\cup \valley$, $x_1 < x_2$ implies $y_1 > y_2$. Together with
	the construction of the paths by matching adjacent mountains and valleys, it
	follows that these paths do not cross.
	
	By construction, $U > L$ always holds. If 
	$\mountain' = \mountain$ holds, level
	$U-1$ gets empty; therefore,~$U$ is decreased. In any case, $U$ cannot be
	increased, because all positions that get occupied are at level $L+1$. A similar
	argument holds if 
	$\valley' = \valley$ holds, thus, $U-L$ decreases by
	at least 1 in each iteration.
	
	If $U-L = 2$ holds, there is at most one level that contains occupied
	and empty positions. All robots on this level will be positioned leftmost; all
	levels with greater values are empty, and all levels with lower values are
	full.  
	
	Because the paths are non-crossing, the computed schedule is stable.
\end{claimproof}

\begin{proof}[Proof of \cref{lem:single-tile-reconfig}]
	Compute stable schedules consisting of Phases~4.1 and~4.2 for the start and the target configuration, $C_s$ and $C_t$, to reconfigure them into~$C_{\Delta}$. Reverting $C_t \rightrightarrows_{\chi} C_{\Delta}$ and combining it with $C_s \rightrightarrows_{\chi} C_{\Delta}$ results in the entire stable schedule transforming $C_s$ into~$C_t$. Because Phase~4.1 as well as Phase~4.2 can be realized in a makespan of $\mathcal{O}(d)$ (because a robot's maximum distance to a side, and the maximum level difference are upper-bounded by~$\mathcal{O}(d)$), the makespan of  
	$C_s \rightrightarrows_{\chi} C_t$ is in $\mathcal{O}(d)$.
\end{proof}

\subsubsection{Refilling Tiled Configurations}\label{sec:realrefilling}

Now we describe the final step for reconfiguring a tiled start configuration $C_s'$ into a tiled target configuration $C_t'$; because no robots are destroyed or created, this hinges on shifting robots between adjacent tiles, such that the required filling levels are achieved.

\begin{lemma}\label{lem:correctness_refilling}
	We can efficiently compute a stable schedule transforming $C_s'$ into~$C_t'$ within a makespan of $\mathcal{O}(d)$.
\end{lemma}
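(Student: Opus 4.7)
The plan is to reduce this to \cref{lem:single-tile-reconfig} by first equalizing the tiles' filling levels via an elementary network flow computation on the tile adjacency graph, and then invoking \cref{lem:single-tile-reconfig} to finish the intra-tile rearrangement.

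\textbf{Step 1 (flow computation).} Form the auxiliary graph $H$ whose vertices are the tiles of $\mathcal{T}$ and whose edges connect horizontal and vertical tile neighbors; for each tile $T$, set its demand to the filling level of $T$ in $C_t'$ minus its filling level in $C_s'$. Since the total number of robots is preserved, these demands sum to zero and a feasible integral flow $f$ exists on $H$. Moreover, by decomposing a bottleneck matching of $(C_s, C_t)$ into unit flows in $H$, I obtain a flow with $|f_e| = \mathcal{O}(cd^2)$ on every edge $e$: each matched pair has Manhattan distance at most $d$, and since a tile has side $cd > d$, a pair's two endpoints can lie at most in horizontally/vertically/diagonally adjacent tiles, contributing to at most two edges of $H$; at most $\mathcal{O}(cd \cdot d) = \mathcal{O}(cd^2)$ such pairs can start within distance $d$ of any single shared side.

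\textbf{Step 2 (realize the flow).} Carry out the transfers across tile boundaries in a constant number of parallel rounds (for instance, a $4$-coloring by direction---right, left, up, down---together with a further $2$-coloring so that in each round no two active senders share a boundary). Within each round, each active transfer moves at most $\mathcal{O}(cd^2)$ robots across a side of length $cd$; since up to $cd$ robots can cross such a side in parallel per integer time step, each round completes in $\mathcal{O}(d)$ time. Within a sending tile, the robots slated to leave are marshalled toward the sending side by the same row/column-translation primitives used in \cref{sec:reconfig-single-tiles}, and within a receiving tile the arriving robots are placed adjacent to the scaffold so that the interior remains connected to the scaffold after every step. Since the scaffold itself is never moved, global connectivity is preserved throughout.

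\textbf{Step 3 (finishing).} After Step 2, the intermediate configuration $C^*$ is a tiled configuration whose filling level on every tile equals that of $C_t'$, so \cref{lem:single-tile-reconfig} applies in parallel across all tiles and produces a stable schedule $C^* \rightrightarrows_\chi C_t'$ of makespan $\mathcal{O}(d)$. Concatenating the schedules of Steps 2 and 3 yields a stable schedule $C_s' \rightrightarrows_\chi C_t'$ of total makespan $\mathcal{O}(d)$, as required.

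The technical crux is Step 2: simultaneously realizing cross-boundary transfers without collisions and without severing connectivity, while keeping each round's makespan at $\mathcal{O}(d)$. The sufficient scale $c \geq c^*$ is what enables this---it guarantees each tile enough interior room to stage outgoing robots on one side while simultaneously housing incoming robots on another, without the two staging areas interfering with each other or with the scaffold, especially near tile corners where different rounds act on adjacent sides.
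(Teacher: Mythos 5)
Your overall architecture (flow on the tile-dual graph, realize it in $\mathcal{O}(1)$ rounds of $\mathcal{O}(d)$ steps each, then finish with \cref{lem:single-tile-reconfig}) matches the paper's, but your key step is genuinely different: the paper only bounds the matching-induced flow per edge by $3(cd-2)^2$, i.e., three times a tile's interior capacity, and therefore needs a path-packing argument (\cref{lem:flow_partition}) to split the flow into at most $28$ subflows of value at most $\phi=\lfloor(cd-2)^2/9\rfloor$ per edge, each realized by the nine-subtile cluster push. You instead observe that a matched pair contributing to an edge must start within Manhattan distance $d$ of the shared side, giving the sharper per-edge bound $\mathcal{O}(cd\cdot d)$; since the tile capacity is $\Theta((cd)^2)$, this is a $\Theta(1/c)$-fraction of capacity, so for large enough constant scale the entire flow on every edge fits in a single direction-indexed round. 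That is a legitimate and arguably cleaner route, and the bound itself is correct; but note that this comparison ($\mathcal{O}(cd^2)$ versus $(cd-2)^2$, summed over the up-to-four incident edges of a tile) is the crux of your whole argument and must be made explicit, because it is exactly what lets you skip the paper's $28$-subflow partition and what fixes $c^*$.

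The genuine gap is in Step~2, for pairs whose endpoints lie in \emph{diagonally} adjacent tiles. You correctly route such a unit of flow over two edges of $H$, say $u\to x\to w$, but then your rounds are indexed by direction, so the two hops land in different rounds. The intermediate tile $x$ may have to forward up to $\Theta(d^2)$ robots that are not present in its interior in $C_s'$ (its interior can even be empty, since $\mathcal{T}$ includes neighbors of start tiles), so if the $x\to w$ round is scheduled before the $u\to x$ round, the transfer is infeasible. For the non-diagonal part of the flow this problem does not arise, because each unit of outgoing flow from a tile corresponds to a distinct robot already sitting in that tile's interior; but the forwarded diagonal flow breaks this invariant. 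The fix is easy --- e.g., orient every diagonal route horizontal-first and schedule both horizontal rounds before both vertical rounds, or handle the $\mathcal{O}(d^2)$ corner pairs separately --- but as written your round decomposition can fail. Relatedly, a tile that receives in an early round and sends only in a later round temporarily holds its initial robots plus all inflow; you should verify this never exceeds $(cd-2)^2$, which again follows from the per-edge bound only for sufficiently large $c$.
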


\subparagraph*{Outline of the Refilling Phase.} To compute the schedule of~\cref{lem:correctness_refilling}, we transfer robots between tiles, so that each tile $T$ contains the desired number in~$C_t'$. 
We model this robot transfer by a \emph{supply and demand flow}, see~\cref{fig:refilling_overview,fig:flowb}, followed by partitioning the flow into~$\mathcal{O}(1)$ subflows, such that each subflow can be realized within a makespan of $\mathcal{O}(d)$. 
For realizing a single subflow, we use the approach of~\cref{sec:reconfig-single-tiles} as a preprocessing step, i.e., to rearrange robots participating in a specific subflow and place them at suitable positions.

\begin{figure}[htb]
	\centering
	\includegraphics[page = 1, scale=0.60]{}
	\caption{An overview of the schedule refilling tiles: transforming $C_s'$ into $C_t'$ by realizing a partition of a supply and demand flow that is computed in advance.}
	\label{fig:refilling_overview}
\end{figure}

\subparagraph*{Modeling Transfer of Robots via a Supply and Demand Flow.} We model the transfer of robots between tiles as a flow $F: E(G) \rightarrow \mathbb{N}$, using the directed graph $G = ({\mathcal{T}}, E)$ which is dual to the tiling ${\mathcal{T}}$.  
Let $B$ be the bottleneck matching between vertices from the original (non-tiled) $C_s$ and vertices from the final (non-tiled) $C_t$.
In $G$ we have an edge $(u,v) \in E$, if there is at least one matching edge $(r_u,r_v) \in B$, such that $r_u$ lies in the interior of the tile~$u$ in configuration $C_s'$, and $r_v$ lies in the interior of the tile~$v$ in configuration~$C_t'$. 
The flow value $F((u,v))$ of $(u,v)$ is equal to the number of such edges $(r_u,r_v) \in B$. 
A vertex $v \in V(G)$ has a \emph{demand} of $a > 0$ if the sum of the flow values of outgoing edges from $v$ plus $a$ is equal to the sum of the flows of incoming edges to $v$.
Analogously, $v$ has a \emph{supply} of $a > 0$ if the sum of flow values incoming to~$v$ plus $a$ equals the sum of flow values outgoing from $v$.

\begin{figure}[htb]
	\centering
	\includegraphics{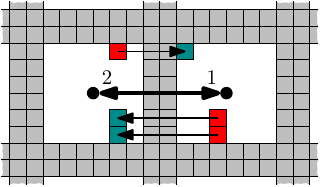}
	\caption{A pair of configurations $C_s'$ (red) and $C_t'$ (cyan) and the resulting flow modeling the edges of the bottleneck matching. The left vertex of $G$ has demand $1$, the right vertex has supply $1$.}
	\label{fig:flowb}
\end{figure}

\subparagraph*{Flow Partition and Algorithmic Computation.} We now define a \emph{flow partition} of $F$.
For $k \in \mathbb{N}$, a \emph{$k$-subflow of $F$} is a supply and demand flow $f: E(G) \rightarrow \mathbb{N}$ on $G$ with $f(e) \leq \min \{ k, F(e) \}$. 
A \emph{$k$-partition} of the flow $F$ is a set $\{ f_1,\dots,f_{\ell} \}$ of $k$-subflows of $F$, such that $\sum_{i = 1,\dots,\ell} f_i(e) = F(e)$ holds for all edges $e \in E(G)$.

We describe our approach for computing a $\phi$-partition of $F$, with $\phi := \lfloor \frac{(cd-2)^2}{9} \rfloor$; the value $\phi$ arises from partitioning the interior (made up of $(cd-2)^2$ pixels) of each tile into~$9$ almost equally sized \emph{subtiles} that are used for realizing a single set of paths as described below, see~\cref{fig:tunnel_aproach}. 

\begin{figure}[ht]
	\centering
	\begin{subfigure}{0.3\textwidth}
		\includegraphics[scale=0.35, page=1]{}
		\caption{}
		\label{fig:flow_partition-a}	
	\end{subfigure}\hfil
	\begin{subfigure}{0.3\textwidth}
		\includegraphics[scale=0.35, page=2]{graphics/flow_partition2}
		\caption{}
		\label{fig:flow_partition-b}	
	\end{subfigure}\hfil
	\caption{(a) We model the movements of robots between tiles as paths
		forming a tree. By greedily assigning these paths to sets (here highlighted by different colors), such that inside each set each edge is contained in no more than
		$3$ paths, we obtain that at most $\Theta(d^2)$ sets are needed. (b)~An example of a set containing three paths (dark red, pink, and red; assigned in that order to~$S_j$) having a common edge $e$ caused by a vertex $v$ of $e$ with an incoming degree of~$3$.}
	\label{fig:flow_partition}	
\end{figure}

We compute a $1$-partition of $G$, with each
$1$-subflow being either a cycle or a path that connects a supply vertex with a
demand vertex. Because the robots are unlabeled, we can simplify $G$ by
eliminating all cyclic $1$-subflows, as they are not necessary to realize
this specific transfer of robots; 
note that this also applies to bidirectional edges. Furthermore, we
replace diagonal edges $(v,w) \in E(G)$ by a pair of adjacent edges
$(v,u),(u,w) \in E(G)$. After all cyclic
subflows are removed, $G$ is a planar, directed forest consisting of $1$-subflows
that are paths, see \cref{fig:flow_partition-a}.  
We process each tree $A\subset G$ that is made up of paths $P_1, P_2, \dots$ separately as follows: We choose an arbitrary vertex of $A$ as its
\emph{root} and consider the \emph{link distance} of $P_i$ as the minimal
length of the path between a vertex from $P_i$ and the root of $A$. Let
$P_1,P_2, \dots \subseteq G$ be sorted by increasing link distances, which is important for our next argument, regarding that we can partition these paths into constant many subflows each of which is realizable in time linear in $d$. We greedily assign each path $P_i = P_1,P_2,\dots$ to a set~$S_j$, such that the first edge $e_1$ of $P_i$ is not part of another path inside~$S_j$. If no such a set~$S_j$ exists, we create a new set $S \leftarrow \{
P_i \}$. For each tree we use the same sets $S_1,S_2,\dots$ of
collected paths, because different trees of $G$ are disjoint. Note, that the construction of the sets $S_j$ allow that an edge is part of at most three paths inside $S_j$. This is due to the fact that the income degree of the head vertex of a directed edge is at most three in the setting of a grid graph, resulting in at most three outgoing edges.

Finally, we greedily partition $\{ S_1,S_2, \dots \}$ into subsets ${G}_1, {G}_2, \dots$ called \emph{groups}, made up of $\frac{(cd-2)^2}{9 \cdot 3}$ sets. For each group $G_i$, we define a subflow $f_i$ by setting
$f_i(e)$ as the number of paths from $G_i$ containing the edge $e$. As for each set $S_i$ and each edge $e$, there are at most three paths inside $S_i$ containing $e$, each resulting subflow $f_i$ is a $\left( \frac{(cd-2)^2}{9 \cdot 3} \cdot 3 \right)=\phi$-subflow. Finally, we have to upper-bound
the number of resulting subflows, i.e., the number of groups.

\begin{lemma}\label{lem:flow_partition}
	The constructed $\phi$-partition $\{f_1,f_2,\dots\}$ consists of at most $28$ subflows.
\end{lemma}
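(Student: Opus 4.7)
The plan is to bound the total number of sets $S_j$ produced by the greedy procedure and then divide by the group size to bound the number of resulting groups (hence subflows).

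First, I would establish that $F(e)\le(cd-2)^2$ for every edge $e=(u,v)\in E(G)$. By construction, each unit of flow on $e$ corresponds to a matching pair $(r_u,r_v)\in B$ whose source $r_u$ lies in the interior of tile $u$ under $C_s'$; since a tile's interior consists of exactly $(cd-2)^2$ grid positions and each holds at most one such source robot, the bound follows immediately.

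Next, I would show that the greedy construction opens at most $(cd-2)^2$ sets in total. Let $k$ denote the final number of sets and consider the step at which the $k$-th set was opened, triggered by some path $P$ with first edge $e_1$. By the greedy rule, each of the $k-1$ pre-existing sets already contained a path through $e_1$; these blocking paths are pairwise distinct (sets are disjoint) and all lie in the same tree of $G$ as $P$ (distinct trees are edge-disjoint, so $e_1$ appears in only one tree). Together with $P$ itself this exhibits $k$ distinct paths through $e_1$, so $k\le F(e_1)\le(cd-2)^2$. Since each group consists of $g:=\lfloor(cd-2)^2/27\rfloor$ sets, the number of groups is $\lceil k/g\rceil$, and an elementary estimate using $g\ge(cd-2)^2/27-1$ yields $k/g\le 27/(1-27/(cd-2)^2)\le 28$ provided $(cd-2)^2\ge 27\cdot 28=756$; this is comfortably satisfied because the scaffold construction already requires $c\ge c^*$ with $c^*$ at least a few hundred.

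The main obstacle will be the middle step, which demands careful bookkeeping: one must verify that the paths that can possibly block $P$ all reside in $P$'s tree (so that their total count is bounded by $F(e_1)$ rather than by some larger quantity involving other trees), and that they are pairwise distinct across the $k-1$ pre-existing sets (ruling out any overcounting due to the fact that a single set is allowed to contain up to three paths through the same edge). Once this is handled cleanly, the flow bound on $F(e_1)$ closes the argument.
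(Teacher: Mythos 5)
Your overall strategy is the same as the paper's: bound the flow through the first edge $e_1$ of the path that forces a new set, conclude that the number of sets is at most that flow value, and divide by the group size. Your ``middle step'' is in fact handled correctly and cleanly -- the sets are disjoint, so the blocking paths are pairwise distinct, and since the cycle-free flow graph is a forest whose trees are edge-disjoint, every path containing $e_1$ lies in the tree of $P$, so their number is indeed bounded by $F(e_1)$. One distinct blocking path per pre-existing set is exactly what the greedy rule guarantees.

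The genuine gap is in your first step: the bound $F(e)\le(cd-2)^2$ does not hold for the flow that the partition is actually applied to. Before the paths are extracted, every diagonal edge $(v,w)$ of the dual graph is replaced by a pair of orthogonal edges $(v,u),(u,w)$, so an edge $(u,w)$ can carry, in addition to the flow of matching pairs whose sources lie in the interior of $u$, the rerouted flow of up to two diagonal edges whose sources lie in the interiors of \emph{other} tiles. This is precisely why the paper's proof works with the bound $3(cd-2)^2$ (its inequality $(\star)$) rather than $(cd-2)^2$. Your justification -- ``each unit of flow on $e$ corresponds to a matching pair whose source lies in the interior of tile $u$'' -- is false for these rerouted units. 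With the corrected bound $F(e_1)\le 3(cd-2)^2$, your own arithmetic gives $k\le 3(cd-2)^2+1$ sets and hence on the order of $81$ groups, not $28$; so the proof as written does not establish the stated constant. (This does not endanger the downstream use of the lemma, which only needs $\mathcal{O}(1)$ subflows, and the paper's own accounting of the constant is itself loose at the step where it converts ``one blocking path per set'' into ``$3(cd-2)^2+1$ paths through $e_1$''; but to match the claim you must either incorporate the factor $3$ from the diagonal substitution into the flow bound and accept a larger constant, or argue separately why the substituted flow on a single edge stays below $(cd-2)^2$.)
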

\begin{proof}
	The number of robots in the interior of each tile is upper-bounded by
	$(cd-2)^2$ which is the number of pixels in a tiles' interior. Substituting at most two diagonal edges increases
	the number of robots leaving or entering a specific tile by at most
	$2(cd-2)^2$. Hence, the flow value of each edge is upper-bounded by $3(cd-2)^2$ $(\star)$
	after substituting diagonal edges.
	
	Suppose our approach constructs $28$  groups implying that our approach gets into configuration with $27 \cdot \frac{(cd-2)^2}{9 \cdot 3} =: \lambda$ constructed sets $S_1,S_2, \dots $ and the current path $P_i$ has to be
	assigned to a new set $S_{\lambda +1}$. This implies that each set $S_1,\dots,S_{\lambda}$ contains a path $S_i$ containing the first edge $e_1$ of $P_i$. Hence,
	there are at least $27 \cdot \frac{(cd-2)^2}{9} + 1 > 3(cd-2)^2$
	paths containing $e_1$, contradicting $(\star)$.  
\end{proof}

\subparagraph*{Realizing a $\phi$-Partition.} Now we describe how to reconfigure a tiled configuration, such that a $\phi$-subflow is removed from $G$. 
A $\phi$-subflow $f_i$ is \emph{realized} by transforming the current configuration into another configuration, such that for each edge $e$ with $f_i((T,T')) > 0$, the number of robots in the interior of tile $T$ is decreased by $f_i((T,T'))$ and the number of the robots in the interior of tile~$T'$ is increased by $f_i((T,T'))$. 


\begin{figure}[ht]
	\centering
	\begin{subfigure}[b]{0.42\linewidth}
		\includegraphics[scale=0.68, page=1]{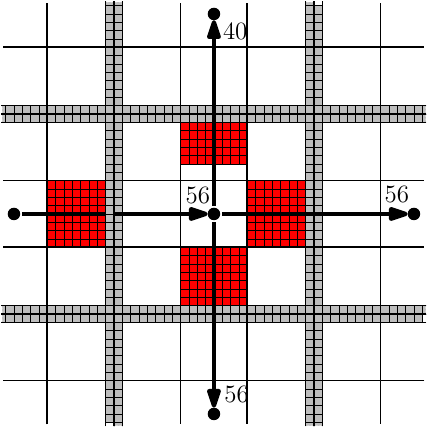}
		\caption{}
		\label{fig:tunnel_aproach-a}
	\end{subfigure}\hfil
	\begin{subfigure}[b]{0.42\linewidth}
		\includegraphics[scale=0.68, page=2]{graphics/tunnel_approach}
		\caption{}
		\label{fig:tunnel_aproach-b}
	\end{subfigure}\hfil
	\begin{subfigure}[b]{0.12\linewidth}
		\caption{\vspace*{0pt}}
		\label{fig:tunnel_aproach-c}
	\end{subfigure}\hfil
	\caption{(a)~A portion of a set of paths containing a vertex $v$: $56$
		paths are passing $v$ from left to right, while $40 + 56 = 96$ paths are starting
		in $v$. (b)~The configuration after realizing the set of paths shown in
		the previous figure. (c)~How positions of robots have changed after
		realization.} 
	\label{fig:tunnel_aproach}
\end{figure}

Next we realize a specific $\phi$-subflow within a makespan of
$\mathcal{O}(d)$. In particular, we partition the interior of each tile $T$
into $9$ \emph{subtiles} with equal side lengths (up to rounding), see
\cref{fig:tunnel_aproach}. For each subflow $f_i$, we place
$f_i((T,T'))$ robots inside the middle subtile of $T$ that shares an edge with 
the boundary of $T$ adjacent to $T'$. In particular, robots placed in the same
subtile are arranged in layers of width $\lfloor \frac{cd-2}{9} \rfloor$ as
close as possible to the boundary of the tile, see
\cref{fig:tunnel_aproach-a}. The resulting arrangement of robots
inside the subtile of $T$ is a \emph{cluster} and~$T'$ the \emph{target tile} of the
cluster. By a single application of the approach from
\cref{sec:reconfig-single-tiles}, all clusters of all tiles are arranged
simultaneously within a makespan of $\mathcal{O}(d)$. Finally, simultaneously
pushing all clusters of all tiles into the direction of their target tiles realizes~$S_i$, see \cref{fig:tunnel_aproach-b}. Note that not
all robots are pushed into the target tile $T'$ but some replace robots
on the boundaries between $T$ and $T'$, see \cref{fig:tunnel_aproach-c}. 

Repeating this approach for each subflow $f_i$ leads to a stable schedule
that realizes the entire flow $F$ within a makespan linear in $d$, i.e.,
transforms $C_s'$ into~$C_t'$ within $\mathcal{O}(d)$ transformation steps,
see \cref{fig:refilling_overview}. 

\begin{proof}[Proof of \cref{lem:correctness_refilling}]
	The schedule described above (and illustrated in
	\cref{fig:tunnel_aproach}) is correct because of the following. 
	For each tile $T \in V(G)$, the number of the paths starting in
	$T$ equals the supply of $T$, while the number of paths ending in $T$ equals
	the demand of $T$. As $T$ has either a supply or a demand, but not
	both, so the tile $T$ is never
	too full to receive additional robots, and never too empty to deliver robots as
	required by paths containing~$T$. 
\end{proof}

This concludes the proof of \cref{thm:scaled-instances-bounded-stretch}. Finally, we adapt the result to the general case, for which overlap of the start and target configuration is not guaranteed.

\begin{corollary}\label{cor:non-overlapping-instances}
	There is a constant $c^*$ such that for any pair of start and target configurations with a scale of at least $c^*$, there is a stable schedule of constant stretch.
\end{corollary}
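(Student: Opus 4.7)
The plan is to reduce \cref{cor:non-overlapping-instances} to \cref{thm:scaled-instances-bounded-stretch} by prepending a short preprocessing phase (Phase~(1) in the informal outline) that rigidly translates $C_s$ until it overlaps $C_t$. Concretely, I first compute a bottleneck matching $B$ between $C_s$ and $C_t$, which has diameter $d$, and pick any matched pair $(r_s,r_t)\in B$. Let $v:=r_t-r_s\in\mathbb{Z}^2$; by definition of the bottleneck diameter, $\lVert v\rVert_1\le d$. The goal of the preprocessing is to translate $C_s$ by exactly $v$ so that the new image of $r_s$ coincides with $r_t$, guaranteeing at least one common grid position with $C_t$.

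The translation itself is realized as a stable schedule of makespan $\lVert v\rVert_1\le d$: decompose $v$ into its horizontal and vertical components, and in each of the $\lVert v\rVert_1$ transformations every robot of the current configuration simultaneously performs the same unit step. Each such transformation is trivially collision-free (a unit translation is a bijection on the grid) and preserves the induced subgraph, hence stability; no scale or connectivity property is disturbed because the whole shape moves as one rigid body. Denote the resulting configuration by $\tilde C_s$; it has the same scale as $C_s$, is connected, and satisfies $\tilde C_s\cap C_t\ni r_t$, so $(\tilde C_s,C_t)$ overlaps.

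Next I apply \cref{thm:scaled-instances-bounded-stretch} to the overlapping pair $(\tilde C_s,C_t)$ with the same constant $c^*$. By the triangle inequality the bottleneck diameter satisfies $d(\tilde C_s,C_t)\le d(\tilde C_s,C_s)+d(C_s,C_t)\le\lVert v\rVert_1+d\le 2d$, so the theorem yields a stable schedule $\tilde C_s\rightrightarrows_\chi C_t$ of makespan $\mathcal{O}(d(\tilde C_s,C_t))=\mathcal{O}(d)$. Concatenating this with the translation schedule gives a stable schedule $C_s\rightrightarrows_\chi C_t$ of total makespan $\mathcal{O}(d)+\mathcal{O}(d)=\mathcal{O}(d)$, i.e., constant stretch relative to the original diameter $d(C_s,C_t)=d$.

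The main thing to verify is that the rigid-translation phase genuinely is a legal stable schedule: collision-freeness and preservation of the $c$-scaled property under a translation are immediate, and connectivity of every intermediate configuration follows because the induced grid subgraph is invariant under integer translations. Apart from this, no new ideas are required; the only quantitative effect of dropping the overlap assumption is an extra factor of at most $2$ in the diameter passed to the theorem, which is absorbed into the constant stretch bound.
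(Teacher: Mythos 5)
Your proof is correct and follows essentially the same route as the paper: translate $C_s$ along a bottleneck matching pair to create overlap (makespan at most $d$), observe that the new pair has diameter at most $2d$, and invoke \cref{thm:scaled-instances-bounded-stretch}. You merely spell out the rigid-translation details (collision-freeness, connectivity, scale preservation) more explicitly than the paper does.
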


\begin{proof}
	In case of a pair $(C_s, C_t)$, consisting of a start and a target configuration, that does not overlap, our algorithm computes in a first step a minimum bottleneck matching mapping $C_s$ to $C_t$ resulting in a bottleneck distance $\overline{d}$, and translates $C_s$ into a configuration $\overline{C}_s$ overlapping the target configuration within a makespan of $\overline{d}$. This results in a bottleneck distance $d$ between $\overline{C}_s$ and $C_t$ which is at most $2\overline{d}$. As \cref{thm:scaled-instances-bounded-stretch} guarantees a makespan linear in $d$, we obtain a makespan linear in $\overline{d}+d$, i.e., linear in $\overline{d}$ for the overall algorithm.
\end{proof}

As the diameter of the pair $(C_s,C_t)$ is a lower bound for the makespan of any schedule transforming $C_s$ into $C_t$, we obtain the following.

\begin{corollary}\label{cor:scaled-instances-constant-factor-approx}
	There is a constant-factor approximation for computing stable schedules with minimal makespan 
	between pairs of start and target configurations with a scale of at least $c^*$, for some constant $c^*$.
\end{corollary}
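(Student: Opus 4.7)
The plan is to combine the upper bound from Corollary~\ref{cor:non-overlapping-instances} with a simple matching-based lower bound on the optimal makespan. Specifically, let $M^*(C_s, C_t)$ denote the minimum makespan of any stable schedule transforming $C_s$ into $C_t$, and let $d$ denote the bottleneck-matching diameter of $(C_s, C_t)$ as defined in the preliminaries.

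First, I would establish that $M^*(C_s, C_t) \geq d$. Consider any stable schedule $C_s = C_1 \to C_2 \to \cdots \to C_{M+1} = C_t$ of makespan $M$. By the definition of a move, a single robot travels at most Manhattan distance $1$ per transformation, so tracing each robot through the schedule yields a one-to-one correspondence (matching) between $C_s$ and $C_t$ in which every matched pair is at Manhattan distance at most $M$. Hence the bottleneck diameter satisfies $d \leq M$, and since this holds for every stable schedule, $d \leq M^*(C_s, C_t)$.

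Next, I would invoke Corollary~\ref{cor:non-overlapping-instances}: for any pair $(C_s, C_t)$ with scale at least $c^*$, we can compute in polynomial time a stable schedule of stretch at most some absolute constant $\kappa$, i.e.\ with makespan $M \leq \kappa \cdot d$. Combining with the lower bound gives
\[
\frac{M}{M^*(C_s, C_t)} \;\leq\; \frac{\kappa \cdot d}{d} \;=\; \kappa,
\]
so the algorithm is a polynomial-time $\kappa$-approximation on the class of instances with scale at least $c^*$.

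There is no substantive obstacle here; the only subtlety is making sure the lower bound argument applies to \emph{stable} schedules and not just the bottleneck argument one would use in the unconstrained setting. Since stability only further restricts the feasible set of schedules, the lower bound $d \leq M^*$ still holds (indeed it holds for arbitrary schedules). The constant $\kappa$ hidden in the $\mathcal{O}(d)$ bound of Corollary~\ref{cor:non-overlapping-instances} is the approximation ratio, and we inherit polynomial runtime from the explicit construction in the scaffold, refilling, and deconstruction phases.
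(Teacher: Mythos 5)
Your proposal is correct and follows essentially the same route as the paper: the paper likewise observes that the bottleneck diameter $d$ of $(C_s,C_t)$ is a lower bound on the makespan of any schedule and combines this with the constant-stretch upper bound of \cref{cor:non-overlapping-instances}. You merely spell out the (easy) robot-tracing argument for the lower bound that the paper states without proof, so there is nothing to add.
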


All steps in the overall algorithm can be computed in an efficient manner. Critical is the computation of matchings (Phase 2): the underlying geometry can be exploited for computing bottleneck matchings for a set of points in the plane in $\mathcal{O}(n^{1.5}\log n)$~\cite{efrat2001geometry}. The~required BFS tree (Phase 3) and flow decomposition~(Phase 4) can both be computed in straightforward manner; note that Phase 4 does not require a smallest decomposition, and the edge capacities are strongly polynomial in the number $n$ of robots.

\section{Conclusion}
We have shown that connected coordinated motion planning is challenging even in relatively simple cases, such as unlabeled robots that have to travel a distance of at most 2 units, by establishing \NP-completeness.
On the other hand, we have shown that (assuming sufficient scale of the swarm), it is possible to compute efficient reconfiguration schedules with constant~stretch.

It is straightforward to extend our approach to other scenarios, e.g., to three-dimensional configurations. 
Other questions appear less clear. 
Is it possible to achieve constant stretch for arrangements with very small scale factor? 
We believe that this may hinge on the ability to perform synchronized shifts on long-distance ``chains'' of robots without delay, which is not a valid assumption for many real-world scenarios. 
(A well-known example is a line of cars when a traffic light turns green.)
As a consequence, the answer may depend on crucial assumptions on motion control; we avoid this issue in our approach. 
Can we provide alternative approaches with either weaker scale assumptions or better stretch factors? 
Can we extend our methods to the labeled case? 
All these questions are left for future work.

\bibliography{references}

\end{document}